\newcommand{\logicP}[0]{\mathcal{P} }
\renewcommand{\logicP}[0]{\mathbf{P}}
\newcommand{\logicLP}[0]{\mathcal{LP} }
\renewcommand{\logicLP}[0]{\mathbf{LP} }
\newcommand{\logicFour}[0]{\mathcal{F} }
\renewcommand{\logicFour}[0]{\mathbf{F} }
\newcommand{\modelsPm}[0]{\models_{\logicP_m} }
\renewcommand{\modelsPm}[0]{\models_{\logicP}^\mathit{min} }
\newcommand{\modelsLPm}[0]{\models_{\logicLP_m} }
\renewcommand{\modelsLPm}[0]{\models_{\logicLP}^\mathit{min}}
\newcommand{\modelsFourMa}[0]{\models_{\logicFour_{m_1}} }
\renewcommand{\modelsFourMa}[0]{\models^{\mathit{min}_1}_{\logicFour} }
\newcommand{\modelsFourMb}[0]{\models_{\logicFour_{m_2}} }
\renewcommand{\modelsFourMb}[0]{\models^{\mathit{min}_2}_{\logicFour} }
\newcommand{\minModels}[0]{\models^\mathcal{I}_\mathcal{L} }
\newcommand{\minSequent}[0]{\Rightarrow^\mathcal{I}_\mathcal{L} }
\newcommand{\unaryOp}[0]{\mathbf{I}_\mathcal{L}^\mathcal{I} }
\newcommand{\unaryNotOp}[0]{\overline{\mathbf{I}_\mathcal{L}^\mathcal{I}} }
\renewcommand{\unaryNotOp}[0]{\mathbf{C}^\mathcal{I}_\mathcal{L} }
\newcommand{\unaryOpPm}[0]{\mathbf{I}_{\logicP}^{\{ \mathbf{b} \}} }
\newcommand{\unaryNotOpPm}[0]{\overline{\mathbf{I}_{\logicP}^{\{ \mathbf{b} \}}} }
\renewcommand{\unaryNotOpPm}[0]{\mathbf{C}_{\logicP}^{\{ \mathbf{b} \}} }
\newcommand{\unaryNotOpFa}[0]{\overline{\mathbf{I}_{\logicFour}^{\{ \mathbf{b} \}}} }
\renewcommand{\unaryNotOpFa}[0]{\mathbf{C}_{\logicFour}^{\{ \mathbf{b} \}} }
\newcommand{\unaryNotOpFb}[0]{\overline{\mathbf{I}_{\logicFour}^{\{ \mathbf{b}, \mathbf{n} \}}} }
\renewcommand{\unaryNotOpFb}[0]{\mathbf{C}_{\logicFour}^{\{ \mathbf{b}, \mathbf{n} \}} }
\newtheorem{definition}{Definition}[section]
\newtheorem{example}{Example}[section]
\newtheorem{theorem}{Theorem}[section]
\newtheorem{proposition}{Proposition}[section]
\newcommand{\Val}{\ensuremath{\mathcal{V}}\xspace}
\newcommand{\Log}{\ensuremath{\mathcal{L}}\xspace}
\newcommand{\val}{\ensuremath{\mathsf{v}}\xspace}
\newcommand{\conn}{\ensuremath{\circ}\xspace}
\newcommand{\const}{\ensuremath{\mathcal{P}}\xspace}
\newcommand{\defcon}{\ensuremath{\mathit{def}}\xspace}
\renewcommand{\defcon}{\ensuremath{\mathit{f}}\xspace}
\newcommand{\valu}{\ensuremath{\mathit{val}}\xspace}
\renewcommand{\valu}[2]{\ensuremath{v^{#1}_{#2}}\xspace}
\newcommand{\Lm}[0]{\Log_{m}}
\renewcommand{\Lm}[0]{\Log}
\newcommand{\Vm}[0]{\Val_{m}}
\renewcommand{\Vm}[0]{\Val_{\Lm}}
\newcommand{\commadots}[0]{,\ldots ,}
\newcommand{\egc}[0]{e.g.,\ }
\newcommand{\iec}[0]{i.e.,\ }
\newcommand{\CFalse}{\mathrm{F}}
\newcommand{\CBoth}{\mathrm{B}}
\newcommand{\CNeither}{\mathrm{N}}
\newcommand{\MEL}{\mathrm{ME}_\Lm^\mathcal{I}}
\def\halmos{\mbox{ }\hfill$\Box$}
\newenvironment{proof}[1]%
   {\par\noindent{\normalsize\emph{Proof}.} \ #1}%
   {\hfill\halmos
   \par}
\title{Sequent-Type Calculi for Systems of\\ Nonmonotonic Paraconsistent Logics}
\author{Tobias Geibinger
\institute{Databases and Artificial Intelligence Group,\\
Institute of Logic and Computation, \\
Technische Universit\"at Wien,\\
Favoritenstra\ss{}e\ 9-11,
A-1040 Vienna, Austria}
\email{tgeibing@dbai.tuwien.ac.at}
\and
Hans Tompits
\institute{Knowledge-Based Systems Group,\\
Institute of Logic and Computation, \\
Technische Universit\"at Wien,\\
Favoritenstra\ss{}e\ 9-11,
A-1040 Vienna, Austria}
\email{tompits@kr.tuwien.ac.at}
}
\begin{document}
\maketitle

\begin{abstract}
Paraconsistent logics constitute an important class of formalisms dealing with non-trivial reasoning from inconsistent premisses.
In this paper, we introduce uniform axiomatisations for a family of nonmonotonic paraconsistent logics based on minimal inconsistency 
in terms of sequent-type proof systems. 
The latter are prominent and widely-used forms of calculi 
well-suited for analysing proof search.
In particular, we provide sequent-type calculi for Priest's three-valued 
\emph{minimally inconsistent logic of paradox}, and for four-valued paraconsistent inference relations due to Arieli and Avron.
Our calculi follow the sequent method first introduced in the context of nonmonotonic reasoning by Bonatti and Olivetti, whose distinguishing feature is the use of a so-called \emph{rejection calculus} for axiomatising invalid formulas. In fact, we present
a general method to obtain sequent systems for any many-valued logic based on minimal inconsistency,
yielding the calculi for the logics of Priest and
of Arieli and Avron 
as special instances.
\end{abstract}

\section{Introduction}

Paraconsistent logics reject the principle of explosion, also known as \emph{ex falso sequitur quodlibet}, which holds in classical logic and allows the derivation of any assertion from a contradiction. The motivation behind paraconsistent logics is simple, as contradictory theories may still contain useful information, hence we would like to be able to draw non-trivial conclusions from said theories. This is of course also interesting in the context of artificial intelligence and especially in knowledge representation. Human knowledge is often contradictory and yet it allows us to reason about the world. 

The interest in nonmonotonic logics was born out of somewhat similar motivations---in particular, from the desire to formalise instances of common-sense reasoning which are difficult to express in classical logic without falling into contradiction and thus triviality. However nonmonotonic logics do not reject the principle of explosion but the monotony principle of classical logic. In those logics, inferences are in general defeasible, meaning that conclusions which have been previously drawn might not be derivable in the light of new information. 

In this paper, we introduce sequent-type proof systems for inference relations which are paraconsistent and nonmonotonic, based on propositional many-valued logics. 
The formalisms we consider are due to Priest~\cite{priest1991} and Arieli and Avron~\cite{arieli1998},
and their nonmonotonic flavour is obtained by a circumscription-like minimal-model reasoning, where models with less amount of inconsistency are in a sense preferred. 

In order to obtain calculi for the mentioned inference relations, we
adopt the sequent method of Bonatti and Olivetti~\cite{bonatti2002}, who introduced 
proof systems for the central nonmonotonic formalisms, viz.\ for default logic~\cite{reiter}, autoepistemic logic~\cite{moore-85}, and circumscription~\cite{McCarthy:1980}.
A key feature of their approach is their usage of a \emph{rejection calculus} for axiomatising invalid formulas, \iec of non-theorems, 
which makes these calculi arguably particularly elegant and suitable for proof-complexity elaborations~\cite{egly01,BeyersdorffMTV12}.
In a rejection calculus, the inference rules formalise the propagation
of refutability instead of validity and establish invalidity by deduction, \iec in a purely syntactic manner.
Rejection calculi are also referred to as \emph{complementary calculi} or \emph{refutation calculi} in the literature and the first axiomatic treatment of rejection was done by {\L}ukasiewicz~\cite{lukas39} in his formalisation of Aristotle's syllogistic.

Analogous to the method of Bonatti~\cite{bonatti2002}, our calculi
comprise three kinds of sequents each: 
(i)~assertional sequents for axiomatising validity in the respective underlying monotonic base logic,
(ii)~\emph{anti-sequents} for axiomatising {invalidity} for the underlying base logics,
and (iii)~sequents for representing nonmonotonic conclusions.

In fact, we prove a somewhat stronger result in that we not only provide calculi for said formalisms, but we give a uniform method to obtain such calculi for any many-valued entailment relation based on minimal inconsistency.

As far as calculi for many-valued logics are concerned, different kinds of sequent-style systems exist in the literature, like systems based on (two-sided) sequents~\cite{beziau99,Avron02} in the style of the original work by Gentzen~\cite{gentzen1935} and employing additional non-standard rules,
or using \emph{hypersequents}~\cite{avron1991}, 
which are tuples of Gentzen-style sequents.
In our sequent and anti-sequent calculi, we follow the approach of Rousseau~\cite{rousseau1967}, 
which is a natural generalisation for many-valued logics of the classical two-sided sequent formulation of Gentzen.
The respective calculi are obtained from a systematic construction for many-valued logics as described by Zach~\cite{zach1993} and by Bogojeski and Tompits~\cite{bogo-tompits20}.

It should be noted that other approaches exist for formalising the inference relations we study in this work. Arieli and Denecker~\cite{arieli2003} describe a method to encode a theory in Belnap's four-valued paraconsistent logic~\cite{belnap1977a} into a classical theory. They then use circumscription to model multiple minimal inconsistent inference relations. In a similar fashion, Besnard, Schaub, Tompits, and Woltran~\cite{besnardSTW05} encode theories in Priest's minimally inconsistent three-valued paraconsistent logic~\cite{priest1991} in terms of \emph{quantified boolean formulas} (QBF). In difference to those approaches, we do not rely on any encoding into another formalism but rather provide a direct proof-theoretic characterisation.

The rest of paper is organised as follows. In the next section, we establish the necessary preliminaries.
The general method to obtain sequent-type calculi for the inferences we are interested in is described in Section~\ref{sec:genMethod}. In Section~\ref{sec:calcLP}, 
we provide concrete sequent systems obtained through our approach. Finally, in Section~\ref{sec:concl}, we give some concluding remarks.

\section{Preliminaries}

\paragraph{Syntax and Semantics of Finite-Valued Propositional Logics.}

A \emph{finite-valued propositional logic}, $\Lm$, is defined over a set $\Vm=\{\val_1\commadots \val_m\}$
of \emph{truth values}, a set $\Vm^+\subset \Vm$ of \emph{designated truth values}  (which are used to define modelhood), and a vocabulary $\mathcal{A}_\Lm$ consisting of (i)~a countably infinite set $\const$ of \emph{propositional constants} and
(ii)~a collection of $n$-ary ($n\geq 0$) \emph{primitive logical connectives}.
We assume that $\Vm$ always contains the truth values $\mathbf{t}$ and $\mathbf{f}$ (representing \emph{truth} and \emph{falsity}, respectively) such that $\mathbf{t} \in \Vm^+$ and $\mathbf{f} \not\in \Vm^+$.
A 0-ary logical connective is called a \emph{logical constant}.
Furthermore, the set $\const$ is assumed fixed throughout this paper.

Formulas of the logic $\Lm$ are referred to as \emph{$\Lm$-formulas} and are inductively defined as follows: (i)~every propositional constant and every logical constant of $\mathcal{A}_\Lm$ is an $\Lm$-formula; (ii)~if $\varphi_1,\dots,\varphi_n$ are $\Lm$-formulas and $\conn$ is an $n$-ary connective of $\mathcal{A}_\Lm$ (for $n\geq 1$), then $\conn(\varphi_1,\dots,\varphi_n)$ is an $\Lm$-formula; and (iii)~$\Lm$-formulas are constructed only according to (i) and (ii).
In the following, binary connectives are usually written infix to increase readability.

 An \emph{$\Lm$-interpretation} is a mapping $I : \const \longrightarrow \Vm$ assigning to each propositional constant a truth value from $\Vm$.
For a set $\Theta\subseteq \const$, we write $I|_\Theta$ to denote the mapping resulting from $I$ by restricting the domain $\const$ to the propositional constants in $\Theta$.

Given an $\Lm$-interpretation $I$, by a \emph{valuation under $I$} we understand a mapping $\valu{I}{\Lm}(\cdot)$ which assigns to each $\Lm$-formula $\varphi$ a truth value of $\Vm=\{\val_1\commadots \val_m\}$ subject to the following conditions: (i)~if $\varphi$ is a propositional constant of $\Lm$, then $\valu{I}{\Lm}(\varphi) = I(\varphi)$; and 
(ii)~if $\varphi=\conn(\psi_1,\dots,\psi_n)$, for an $n$-ary logical connective ($n\geq 0$), then  $\valu{I}{\Lm}(\varphi)=\defcon_{\conn}
    (\valu{I}{\Lm}(\psi_1),\dots,\valu{I}{\Lm}(\psi_n))$, where $\defcon_{\conn}:\Vm^n\longrightarrow\Vm$ is a function representing the truth conditions of $\conn$ in $\Lm$ (if the arity of $\conn$ is 0, \iec if $\varphi$ is a logical constant, then $\defcon_{\conn}$ is some fixed element from $\Vm$).

If $\valu{I}{\Lm}(\varphi) \in \Vm^+$, then we say that $I$ is an \emph{$\mathcal{L}$-model} of $\varphi$, which we also denote by $I \models_{\mathcal{L}} \varphi$. An $\mathcal{L}$-formula $\varphi$ is called \emph{valid} iff every $\mathcal{L}$-interpretation of $\varphi$ is also an $\mathcal{L}$-model of $\varphi$. By $\mathit{Mod}_{\mathcal{L}}(\varphi)$ we denote the set of all $\mathcal{L}$-models of an $\mathcal{L}$-formula~$\varphi$.

By an \emph{$\Lm$-theory} we understand a set of $\Lm$-formulas.
An $\Lm$-interpretation $I$ is an $\Lm$-model of an $\Lm$-theory $\Gamma$ if $I$ is an $\Lm$-model of all elements of $\Gamma$. The set of all $\Lm$-models of a $\Lm$-theory $\Gamma$ is denoted by 
$\mathit{Mod}_{\mathcal{L}}(\Gamma)$.
An $\Lm$-formula $\varphi$ is a \emph{semantic consequence} of an $\Lm$-theory $\Gamma$ (in $\Lm$), denoted by 
$\Gamma \models_{\mathcal{L}} \varphi$, iff $\mathit{Mod}_{\mathcal{L}}(\Gamma) \subseteq \mathit{Mod}_{\mathcal{L}}(\varphi)$. Furthermore, for two $\mathcal{L}$-theories $\Gamma$ and $\Delta$, we define $\Gamma \models_{\mathcal{L}} \Delta$ iff  $\Gamma \models_{\mathcal{L}} \varphi$, for some $\varphi \in \Delta$.
 
 If it is clear from the context, to ease notation, we usually drop the prefix ``$\Lm$-'' in the concepts introduced above.

\paragraph{Three-Valued Paraconsistent Minimal Entailment.}

We define the three-valued paraconsistent entailment relation $\modelsLPm$, due to Priest~\cite{priest1991}, by means of the paraconsistent three-valued logic $\logicP$, following Avron~\cite{avron1991}.

The elements of $\logicP$ are as follows: (i)~the truth values of $\logicP$ are given by $\mathcal{V}_\logicP= \{\mathbf{f}, \mathbf{b}, \mathbf{t}\}$,  where $\mathbf{b}$ stands for ``both'', \iec the truth value referring to inconsistency; it is assumed that the truth values are ordered according to the stipulation that
$\mathbf{f} < \mathbf{b} < \mathbf{t}$;
(ii)~the designated truth values are $\mathcal{V}^+_{\logicP} = \{\mathbf{b}, \mathbf{t}\}$; (iii)~the primitive logical connectives of $\logicP$ are $\neg$, $\land$, and the 
logical constant $\CFalse$; and (iv) the valuation function $v^I_{\logicP}$, for an interpretation $I$, satisfies the following conditions:
\begin{itemize}
\item $v^I_{\logicP}(p) = I(p)$, for a propositional constant $p$;
\item $v^I_{\logicP}(\CFalse) = \mathbf{f}$;
\item $v^I_{\logicP}(\neg \varphi) = \mathbf{t}$ if $v^I_{\logicP}(\varphi) = \mathbf{f}$,
$v^I_{\logicP}(\neg \varphi) = \mathbf{f}$ if $v^I_{\logicP}(\varphi) = \mathbf{t}$, and
$v^I_{\logicP}(\neg \varphi) = \mathbf{b}$ if $v^I_{\logicP}(\varphi) = \mathbf{b}$;
\item $v^I_{\logicP}(\varphi \land \psi) = \mathit{min}(v^I_{\logicP}(\varphi),v^I_{\logicP}(\psi))$; and

\item $v^I_{\logicP}(\varphi \supset \psi) = v^I_{\logicP}(\psi)$ if $v^I_{\logicP}(\varphi) \in  \mathcal{V}^+_{\logicP}$, and
$v^I_{\logicP}(\varphi \supset \psi) = \mathbf{t}$ otherwise.
\end{itemize}

According to Avron~\cite{avron1999}, 
the connectives $\neg$, $\land$, $\supset$, and $\CFalse$ are functionally complete, i.e., any truth function (or, equivalently, logical connective) can be expressed by a $\logicP$-formula containing these connectives.
For example, the connective $\lor$ can be defined in the standard manner as $\varphi \lor \psi := \neg (\neg \varphi \land \neg \psi)$.

The \emph{logic of paradox}, $\logicLP$, due to Priest~\cite{priest1979},
is the sublogic of $\logicP$ obtained by excluding $\supset$ from the alphabet and using instead the defined implication $\varphi \rightarrow \psi:=\neg \varphi \lor \psi$.
For defining the relation $\modelsLPm$~\cite{priest1991}, let us call an
$\logicLP$-model $I$ of a theory $\Gamma$ \emph{minimally inconsistent} iff there is no other $\logicLP$-model $J$ of $\Gamma$ such that  
$\{ p \in \const \mid v^J_{{\logicLP}}(p) = \mathbf{b} \}\subset \{ p \in \const \mid v^I_{{\logicLP}}(p) = \mathbf{b} \}$.
Then, for theories $\Gamma$ and $\Delta$, $\Gamma\modelsLPm\Delta$ holds iff every 
minimally inconsistent $\logicLP$-model $I$ of $\Gamma$ is also a $\logicLP$-model of some $\varphi\in\Delta$. 
We also define analogously an entailment for $\logicP$, denoted by $\modelsPm$.

\paragraph{Four-Valued Paraconsistent Minimal Entailment.}
The four-valued paraconsistent minimal entailment relations $\modelsFourMa$ and $\modelsFourMb$, due to Arieli and Avron~\cite{arieli1998},
are defined in terms of the logic $\logicFour$ (also called $\mathbf{FOUR}$), which was introduced by Belnap~\cite{belnap1977b,belnap1977a} and extensively studied by Ginsberg~\cite{ginsberg1988}, Fitting~\cite{fitting1989,fitting1990}, and Arieli and Avron~\cite{arieli1994,arieli2000,arieli1996,arieli1998}.
Its truth values are  $\mathcal{V}_\logicFour = \{\mathbf{f},\mathbf{b}, \mathbf{n}, \mathbf{t}\}$, where $\mathbf{b}$ and $\mathbf{t}$ are designated, \iec  $\mathcal{V}_\logicFour^+ = \{\mathbf{b}, \mathbf{t}\}$, and $\mathbf{n}$ can be read as ``neither''. 
 The truth values of $\mathcal{V}_\logicFour$ are usually considered with respect to two partial orders: A  \emph{truth order}, $\leq_t$, and a \emph{knowledge order}, $\leq_k$. A simple way to depict both of those orders is to consider the truth values as elements of the bilattice as shown in Figure~\ref{fig:fourbilattice}, where $\leq_t$ is the order along the $x$-axis and $\leq_k$ the one along the $y$-axis.

\begin{figure}
\centering
\begin{tikzpicture}[scale=0.7]
  \node (top) at (0,0)   {$\mathbf{b}$};
  \node (f)   at (-1,-1) {$\mathbf{f}$};
  \node (t)   at (1,-1)   {$\mathbf{t}$};
  \node (bot) at (0,-2)   {$\mathbf{n}$};
  \draw (bot) -- (f) -- (top) -- (t) -- (bot);
  \node (corner1) at (-1.5,0.5) {};
  \node (corner2) at (-1.5,-2.5) {};
  \node (corner3) at (1.5,-2.5) {};
  \draw[->] (corner2) -- node[anchor=north] {$t$} (corner3);
  \draw[->] (corner2) -- node[anchor=east] {$k$} (corner1);
\end{tikzpicture}
\caption{The $\mathbf{FOUR}$ bilattice.}\label{fig:fourbilattice}
\end{figure}
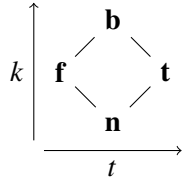
 
Following Arieli and Avron~\cite{arieli1998}, we take as primitive logical connectives of $\logicFour$ the operators $\neg, \land$, $\supset$,  and the logical constants $\CBoth$ and $\CNeither$.
Originally, $\supset$ is not part of the primitive connectives of $\logicFour$ but instead $\lor$ and the operators $\oplus$ and $\otimes$ are used, where the latter two work on the knowledge order rather than on the truth order as the other connectives. 
However, the set of connectives we use are functionally complete and $\lor$, $\oplus$, and $\otimes$ can thus be defined in terms of them.

The valuation function $v^I_{\logicFour}(\cdot)$ of $\logicFour$ is defined as follows:
\begin{itemize}
\item $v^I_{\logicFour}(p) = I(p)$, where $p$ is a propositional constant;
\item $v^I_{\logicFour}(\CBoth) = \mathbf{b}$ and $v^I_{\logicFour}(\CNeither) = \mathbf{n}$;
\item $v^I_{\logicFour}(\varphi \land \psi) = \mathit{min}_t(v^I_{\logicFour}(\varphi),v^I_{\logicFour}(\psi))$, where $\mathit{min}_t$ is the minimum with respect to $\leq_t$; 
\item $v^I_{\logicFour}(\neg \varphi) = \mathbf{t}$ if $v^I_{\logicFour}(\varphi) = \mathbf{f}$,
$v^I_{\logicFour}(\neg \varphi) = \mathbf{f}$ if $v^I_{\logicFour}(\varphi) = \mathbf{t}$, otherwise $v^I_{\logicFour}(\neg \varphi) =v^I_{\logicFour}(\varphi)$; and

\item 
$v^I_{\logicFour}(\varphi \supset \psi) = v^I_{\logicFour}(\psi)$ if $v^I_{\logicFour}(\varphi) \in \mathcal{V}^+_{\logicFour}$, and $v^I_{\logicFour}(\varphi \supset \psi) = \mathbf{t}$ otherwise.
\end{itemize}

From these conditions, we can define 
$\varphi \lor \psi := \neg (\neg \varphi \land \neg \psi)$,
$\varphi \otimes \psi:=(\varphi \land \mathbf{B}) \lor (\psi \land \mathbf{B}) \lor (\varphi \land \psi)$, and $\varphi \otimes \psi:=(\varphi \land \mathbf{N}) \lor (\psi \land \mathbf{N}) \lor (\varphi \land \psi)$.
It can easily be seen that $\land$ and $\lor$ correspond to the meet and join of the $\leq_t$-lattice whilst $\otimes$ and $\oplus$ correspond to the meet and join of the $\leq_k$-lattice.

The inference relations $\modelsFourMa$ and $\modelsFourMb$ by Arieli and Avron~\cite{arieli1998} are now defined thus:
Let us call an $\logicFour$-model $I$ of an $\logicFour$-theory $\Gamma$
\emph{most consistent} relative to a set $\mathcal{I}$ of truth values iff there is no other $\logicFour$-model $J$ such that
$\{ p \in \const \mid v^J_{{\logicFour}}(p) \in \mathcal{I} \} \subset \{ p \in \const \mid v^I_{{\logicFour}}(p) \in \mathcal{I} \}$.
Then, for $\logicFour$-theories $\Gamma$ and $\Delta$, 
$\Gamma \modelsFourMa \Delta$ holds iff
every $\logicFour$-model of $\Gamma$ which is most consistent relative to $\mathcal{I}=\{ \mathbf{b} \}$ is also an $\logicFour$-model of some formula in $\Delta$, while $\modelsFourMb$ is similarly defined but using $\mathcal{I} = \{ \mathbf{b}, \mathbf{n} \}$ instead.

\begin{example}\label{ex:modusPonens}
Consider $\Gamma = \{ p, \neg (p \land \neg q) \}$.
Then, $\Gamma \not\models_\logicP q$ as well as $\Gamma \not\models_\logicFour q$,
but $\Gamma \modelsPm q$, $\Gamma \modelsFourMa q$, and $\Gamma \modelsFourMb q$ all hold. Furthermore, for $\Gamma' = \Gamma \cup \{ \neg q\}$, we have $\Gamma' \not\modelsPm q$, $\Gamma' \not\modelsFourMa q$, and $\Gamma' \not\modelsFourMb q$. Hence, all those entailment relations are nonmonotonic.
Note also that $\Gamma'$ is clearly inconsistent in the sense of classical logic.
\end{example}

\section{Sequent Calculi for General Minimal Entailment}\label{sec:genMethod}

In order to obtain sequent-type calculi for the three- and four-valued paraconsistent entailment relations as defined above, we actually provide a uniform method for obtaining sequent calculi for generalised versions of these inference relations, given an arbitrary finite-valued logic as underlying base logic.
The calculi for $\modelsLPm$, $\modelsPm$, $\modelsFourMa$, and $\modelsFourMb$ are then obtained as special instances of the general method.

Following the sequent method of Bonatti and Olivetti~\cite{bonatti2002}, which we adopt here, our calculi involve three kinds of sequents, viz.\ assertional sequents for axiomatising validity in the underlying base logic, anti-sequents for axiomatising invalid formulas, and special sequents representing minimal entailment.

We start with defining our general minimal entailment relations and providing the postulates of the corresponding calculi, and afterwards we show soundness and completeness of the calculi.
The concrete systems for $\modelsLPm$, $\modelsPm$, $\modelsFourMa$, and $\modelsFourMb$ will be given in Section~\ref{sec:calcLP}.

Throughout this section, we assume to deal with a finite-valued logic $\Lm$ with truth values $\Vm=\{t_1\commadots t_n\}$ and a fixed set $\mathcal{I}\subseteq\Vm$ representing truth values to be minimised.
Our aim is to define a minimal entailment relation $\minModels$ and axiomatise it in terms of a sequent calculus.

Let us first define the relation $\minModels$.

\begin{definition}
Let $I$ and $J$ be $\Lm$-interpretations and $\Theta\subseteq\const$ be a set of propositional constants. Then, the relation
$I \leq_{\mathcal{L}}^{\mathcal{I},\Theta} J $ holds if 
$\{ p \in \Theta \mid v^I_{{\mathcal{L}}}(p) \in \mathcal{I} \} \subseteq \{ p \in \Theta\mid v^J_{{\mathcal{L}}}(p) \in \mathcal{I} \}$.
We write $I <_{\mathcal{L}}^{\mathcal{I},\Theta} J $ if $I \leq_{\mathcal{L}}^{\mathcal{I},\Theta} J $ but not $J \leq_{\mathcal{L}}^{\mathcal{I},\Theta} I $.

An $\Lm$-model of an $\Lm$-theory $\Gamma$ is \emph{$(\mathcal{I}; \Theta)$-minimal} if there is no $\Lm$-model $J$ of $\Gamma$ such that $J <_{\mathcal{L}}^{\mathcal{I},\Theta} I$.
If $\Theta=\const$, then an $(\mathcal{I}; \Theta)$-minimal model is simply referred to as being  \emph{$\mathcal{I}$-minimal}.

For $\mathcal{L}$-theories $\Gamma$ and $\Delta$, the relation
$\Gamma \minModels \Delta$ holds if for every $\mathcal{I}$-minimal $\mathcal{L}$-model $I$ of $\Gamma$, $I$ is an $\Lm$-model of some $\varphi \in \Delta$.
\end{definition}

In the context of relation $\minModels$, $\Lm$ is also referred to as the \emph{inner logic}.
Clearly, we have that
$\modelsLPm \, = \ \models_{\logicLP}^{\{\mathbf{b}\}}$, 
$\modelsPm  \, = \ \models_{\logicP}^{\{\mathbf{b}\}}$, 
$\modelsFourMa \, = \ \models_{\logicFour}^{\{\mathbf{b}\}}$, and 
$\modelsFourMb \, = \ \models_{\logicFour}^{\{\mathbf{b},\mathbf{n}\}}$.

As a first step towards our calculi, we now need sequent calculi for the inner logic $\Lm$ axiomatising, on the one hand, consequence $\Gamma\models_{\mathcal{L}}\Delta$ and, on the other hand, non-consequence $\Gamma\not\models_{\mathcal{L}}\Delta$.

For axiomatising consequence in $\Lm$, we use the method of Zach~\cite{zach1993}, who gave a general construction for obtaining sequent systems for any finite-valued logic, and for axiomatising non-consequence, we use the anti-sequent method of Bogojeski and Tompits~\cite{bogo-tompits20}, who provided a similar systematic method to obtain rejection systems for any finite-valued logic based on the method of Zach.
These methods use \emph{many-sided sequents}, following the original proposal of Rousseau~\cite{rousseau1967}, which is a natural generalisation for many-valued logics of the two-sided sequent method originally proposed by Gentzen~\cite{gentzen1935} for classical and intuitionistic logic. More specifically, both approaches reduce many-valued logics to two valued logic based on the concept of so-called \emph{partial normal forms}. Intuitively, those partial normal forms encode the many-valued semantics of the logical connectives into classical (two-valued) propositional formulas. From those normal forms, the needed rules for the connectives can then be derived.

For the purposes of axiomatising $\minModels$, it is not necessary at this point to fully specify the postulates of the calculi for $\Lm$, we only need to assume that such calculi exist---concrete systems for $\logicP$, $\logicLP$, and $\logicFour$ will be given in Section~\ref{sec:calcLP}.
We provide the necessary details in the following.

\begin{definition}
	An $\mathcal{L}$-\emph{sequent} for an $n$-valued logic $\mathcal{L}$ is an $n$-tuple $\mathfrak{S} = \Gamma_{1} \mid \dots \mid \Gamma_{n}$, where each $\Gamma_{i}$ is a finite set of $\Lm$-formulas, called \emph{component} of the sequent, and is associated with a truth value $t_i \in \mathcal{V}_\mathcal{L}$. For an $\mathcal{L}$-interpretation $I$, a sequent $\mathfrak{S}$ is \emph{true} under $I$ if some component $\Gamma_{t_i}$ contains some formula $\varphi$ such that $v_\mathcal{L}(\varphi) = t_i$. Furthermore, a sequent is \emph{valid} if it is true under any interpretation.
\end{definition}

Note that a standard sequent $\Gamma\vdash\Delta$ of classical logic in the sense of Gentzen~\cite{gentzen1935} corresponds to the sequent $\Gamma \mid\Delta$ according to the above definition.

As customary, we write sequent components comprised of a singleton set $\{\varphi\}$ simply as ``$\varphi$'' and similarly $\Gamma \cup \{\varphi\}$ as ``$\Gamma, \varphi$''.

Let us denote the sequent-type system for $\Lm$ based on $\Lm$-sequents obtained from the method of Zach~\cite{zach1993} by $\mathsf{S}_\mathcal{L}$. 
As these calculi do not encode logical consequence directly, but rather formalise truth conditions, we need some further notation.

First of all, by $\mathfrak{E}_n$ we denote the $\Lm$-sequent $\emptyset \mid \dots \mid \emptyset$.
Moreover, for two $\Lm$-sequents $\mathfrak{S}_1=\Gamma_1\mid\dots\mid\Gamma_n$ and $\mathfrak{S}_2=\Delta_1\mid\dots\mid\Delta_n$, we define the combination of $\mathfrak{S}_1$ and $\mathfrak{S}_2$ by $\mathfrak{S}_1,\mathfrak{S}_2:=\Gamma_1,\Delta_1\mid\dots\mid\Gamma_n, \Delta_n$. 

For a sequent $\mathfrak{S} = \Gamma_1\mid\cdots\mid\Gamma_n$ and a set $\Delta$ of formulas, $\mathfrak{S},[i:\Delta]$ denotes the $\Lm$-sequent that has the same components 
as $\mathfrak{S}$ but additionally contains $\Delta$ in its $i$-th component, i.e., 
$\mathfrak{S},[i:\Delta]=\Gamma_1\mid\cdots\mid\Gamma_i,\Delta \mid\cdots\mid\Gamma_n.$
This notation can also be applied repeatedly to a sequent in the following manner: Let 
$\mathfrak{S}=\Gamma_1\mid\cdots\mid\Gamma_n$, then 
$\mathfrak{S},[i_1:\Delta_1],\dots,[i_m:\Delta_m] := \Gamma_1\mid\cdots\mid\Gamma_{i_1},\Delta_1 
\mid\cdots\mid\Gamma_{i_m},\Delta_m\mid\cdots\mid\Gamma_n.$

Given an $\Lm$-sequent $\mathfrak{S}$, a set $\Delta$ of formulas, and a set $M\subseteq\{1,\dots,n\}$, we define
$\mathfrak{S},[M:\Delta]:=\mathfrak{S}, [i_1:\Delta], \dots, [i_m:\Delta],$
where $M=\{i_1,\dots,i_n\}$. 
For example, consider the three-component sequent
$\mathfrak{S}=\Gamma_1\mid\Gamma_2\mid\Gamma_3$, an arbitrary set $\Delta$ of formulas, and 
$M=\{1,3\}$. Then, 
$\mathfrak{S},[M:\Delta]=\mathfrak{S}, [1:\Delta], [3:\Delta] = \Gamma_1,\Delta\mid\Gamma_2\mid\Gamma_3,\Delta$.

\begin{definition}\label{def:consequence}
Let $\Gamma$ and $\Delta$ be $\mathcal{L}$-theories. Then, by $\Gamma \vdash_\mathcal{L} \Delta$ we denote the $\Lm$-sequent $\mathfrak{E}_n,[M^-:\Gamma],[M^+:\Delta]$, where $M^+=\{i\mid t_i\in \mathcal{V}_\mathcal{L}^+\}$ and $M^-=\{i\mid t_i\in \mathcal{V}_\mathcal{L} \setminus \mathcal{V}_\mathcal{L}^+ \}$.	
\end{definition}
Note that for, \egc $\mathcal{L}=\logicP$, $\Gamma \vdash_\logicP \Delta$ denotes the $\logicP$-sequent $\Gamma\mid\Delta\mid\Delta$.

The following result was shown by Zach~\cite{zach1993}:
\begin{proposition}\label{thm:compStandard}
$\Gamma \vdash_\mathcal{L} \Delta$ is provable in $\mathsf{S}_\mathcal{L}$ iff $\Gamma \models_\mathcal{L} \Delta$.
\end{proposition}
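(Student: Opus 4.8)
The plan is to reduce Proposition~\ref{thm:compStandard} to the known correctness of the Rousseau/Zach many-sided sequent calculus, by first verifying that the encoding of logical consequence as the sequent $\Gamma \vdash_\mathcal{L} \Delta = \mathfrak{E}_n,[M^-:\Gamma],[M^+:\Delta]$ faithfully captures the relation $\Gamma \models_\mathcal{L} \Delta$ at the semantic level, and then invoking the soundness and completeness of $\mathsf{S}_\mathcal{L}$ with respect to sequent validity. Concretely, I would split the argument into two lemmas: (a)~an $\mathcal{L}$-sequent $\mathfrak{S}$ is provable in $\mathsf{S}_\mathcal{L}$ iff $\mathfrak{S}$ is valid (this is exactly the result established by Zach~\cite{zach1993} for his systematically constructed calculi); and (b)~the sequent $\Gamma \vdash_\mathcal{L} \Delta$ is valid iff $\Gamma \models_\mathcal{L} \Delta$. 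Chaining (a) and (b) yields the proposition.

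For step (b), I would argue directly from the definitions. Unfolding Definition~\ref{def:consequence}, the sequent $\Gamma \vdash_\mathcal{L} \Delta$ has, in its $i$-th component, the set $\Gamma$ whenever $t_i \notin \mathcal{V}_\mathcal{L}^+$ and the set $\Delta$ whenever $t_i \in \mathcal{V}_\mathcal{L}^+$. By the definition of truth of a sequent under an interpretation $I$, this sequent is true under $I$ iff either some $\varphi \in \Gamma$ has $v^I_\mathcal{L}(\varphi) = t_i$ for some non-designated $t_i$ (i.e.\ $I \not\models_\mathcal{L} \varphi$, so $I$ fails to be a model of $\Gamma$), or some $\psi \in \Delta$ has $v^I_\mathcal{L}(\psi) = t_i$ for some designated $t_i$ (i.e.\ $I \models_\mathcal{L} \psi$). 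Thus $\Gamma \vdash_\mathcal{L} \Delta$ is true under $I$ iff $I \notin \mathit{Mod}_\mathcal{L}(\Gamma)$ or $I \in \mathit{Mod}_\mathcal{L}(\psi)$ for some $\psi \in \Delta$. Quantifying over all interpretations $I$, the sequent is valid iff every model of $\Gamma$ is a model of some $\psi \in \Delta$, which is precisely $\mathit{Mod}_\mathcal{L}(\Gamma) \subseteq \bigcup_{\psi\in\Delta}\mathit{Mod}_\mathcal{L}(\psi)$, i.e.\ $\Gamma \models_\mathcal{L} \Delta$ in the sense defined in the preliminaries. Here one should note the mild subtlety that $\Gamma$ and $\Delta$ are required to be finite for the expression $\mathfrak{E}_n,[M^-:\Gamma],[M^+:\Delta]$ to denote a genuine sequent; for infinite theories the statement is understood via compactness or restricted to the finitary case, and I would add a remark to that effect.

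The main obstacle is really step (a): it is not re-proved here but imported from Zach's work, so the burden is to state precisely which calculus $\mathsf{S}_\mathcal{L}$ is meant and to be sure its soundness-and-completeness theorem is exactly ``provable iff valid'' for the notion of sequent validity used in the present Definition of $\mathcal{L}$-sequents. Since the paper explicitly says $\mathsf{S}_\mathcal{L}$ is the system ``obtained from the method of Zach'', the cleanest route is to cite Zach~\cite{zach1993} for (a) verbatim and spend the written proof only on the bookkeeping of (b). I would therefore present the proof as: ``By~\cite{zach1993}, $\mathfrak{S}$ is provable in $\mathsf{S}_\mathcal{L}$ iff $\mathfrak{S}$ is valid. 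It thus suffices to show that $\Gamma \vdash_\mathcal{L} \Delta$ is valid iff $\Gamma \models_\mathcal{L} \Delta$,'' followed by the component-unfolding computation sketched above. No induction on formula structure or on derivations is needed on our side, since all of that is absorbed into the cited result; the only genuinely new content is the translation between the sequent-encoding of consequence and the model-theoretic definition, which is routine but worth spelling out to keep the paper self-contained.
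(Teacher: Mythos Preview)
Your proposal is correct and in fact more explicit than the paper itself: the paper does not give a proof at all but simply attributes the result to Zach~\cite{zach1993}. Your decomposition into (a) the cited soundness/completeness of $\mathsf{S}_\mathcal{L}$ and (b) the routine verification that the sequent $\mathfrak{E}_n,[M^-:\Gamma],[M^+:\Delta]$ is valid iff $\Gamma \models_\mathcal{L} \Delta$ is exactly what underlies the citation, and spelling out~(b) as you do is a helpful addition rather than a deviation.
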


Now we provide the necessary details of the method of Bogojeski and Tompits~\cite{bogo-tompits20}.

\begin{definition}
	An $\mathcal{L}$-\emph{anti-sequent} for an $n$-valued logic $\mathcal{L}$ is an $n$-tuple ${\mathfrak{A}} = \Gamma_{1} \nmid \dots \nmid \Gamma_{n}$, where each $\Gamma_{i}$ is a finite set of $\Lm$-formulas, again called \emph{component} of the anti-sequent, and each component is associated with a truth value $t_i \in \mathcal{V}_\mathcal{L}$. For an $\mathcal{L}$-interpretation $I$, an anti-sequent ${\mathfrak{A}}$ is \emph{refuted} by $I$ if no component $\Gamma_{i}$ contains some formula $\varphi$ such that $v_\mathcal{L}(\varphi) = t_i$. Furthermore, an $\Lm$-anti-sequent is \emph{refutable} if it is refuted by some interpretation.
\end{definition}
Clearly, an $\mathcal{L}$-anti-sequent  $\Gamma_{1} \nmid \dots \nmid \Gamma_{n}$
is refutable iff the corresponding $\mathcal{L}$-sequent $\Gamma_{1} \mid \dots \mid \Gamma_{n}$ is valid.

Let us denote the anti-sequent calculus for $\Lm$ based on $\Lm$-anti-sequents obtained from the method of Bogojeski and Tompits~\cite{bogo-tompits20} by 
$\mathsf{R}_\mathcal{L}$.
Furthermore, the notation for combining $\Lm$-sequents is defined \emph{mutatis mutandis} for $\Lm$-anti-sequents, where, instead of ${\mathfrak{E}}_n$, we use the $\Lm$-anti-sequent 
${\mathfrak{F}}_n:=\emptyset \nmid \dots \nmid \emptyset$.

We next give the pendants of Definition~\ref{def:consequence} and Proposition~\ref{thm:compStandard}:
\begin{definition}
Let 
$\Gamma$ and $\Delta$ be $\mathcal{L}$-theories. Then, by $\Gamma \dashv_\mathcal{L} \Delta$ we denote the $\Lm$-anti-sequent ${\mathfrak{F}}_n,[M^-:\Gamma],[M^+:\Delta]$, where $M^+=\{i\mid t_i\in \mathcal{V}_\mathcal{L}^+\}$ and $M^-=\{i\mid t_i\in \mathcal{V}_\mathcal{L} \setminus \mathcal{V}_\mathcal{L}^+ \}$.
\end{definition}

\begin{proposition}[\cite{bogo-tompits20}]
$\Gamma \dashv_\mathcal{L} \Delta$ is provable in $\mathsf{R}_\mathcal{L}$ iff $\Gamma \not\models_\mathcal{L} \Delta$.
\end{proposition}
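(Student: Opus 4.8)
The plan is to split the biconditional at the semantic level and then carry out a standard soundness/completeness argument for the anti-sequent calculus $\mathsf{R}_\mathcal{L}$. First I would establish that $\Gamma \dashv_\mathcal{L} \Delta$ is refutable if and only if $\Gamma \not\models_\mathcal{L} \Delta$. This is pure unwinding of definitions: the anti-sequent $\Gamma\dashv_\mathcal{L}\Delta = {\mathfrak{F}}_n,[M^-:\Gamma],[M^+:\Delta]$ is refuted by an interpretation $I$ exactly when no $\varphi\in\Gamma$ gets a non-designated value under $v^I_\mathcal{L}$ and no $\psi\in\Delta$ gets a designated one, i.e.\ when $I$ is an $\mathcal{L}$-model of every formula in $\Gamma$ and of no formula in $\Delta$. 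Such an $I$ is precisely a counter-model to $\Gamma\models_\mathcal{L}\Delta$, so refutability of $\Gamma\dashv_\mathcal{L}\Delta$ coincides with $\Gamma\not\models_\mathcal{L}\Delta$. It therefore suffices to prove that $\mathsf{R}_\mathcal{L}$ derives exactly the refutable $\mathcal{L}$-anti-sequents; I would in fact prove this for arbitrary $\mathcal{L}$-anti-sequents, not only those arising as $\Gamma\dashv_\mathcal{L}\Delta$, since the induction is smoother at that level of generality.

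For soundness---every $\mathsf{R}_\mathcal{L}$-derivable anti-sequent is refutable---I would argue by induction on the length of the derivation. In the base case one verifies that each axiom of $\mathsf{R}_\mathcal{L}$, which is an atomic anti-sequent admitting a consistent \emph{non-witnessing} assignment to the propositional and logical constants occurring in it, is refuted by the interpretation induced by that assignment. In the inductive step one checks, rule by rule, that every logical rule of $\mathsf{R}_\mathcal{L}$ carries refutability from the appropriate premise(s) to its conclusion; this is immediate because the rules are read off the partial normal forms encoding the truth condition of the principal connective, so an interpretation refuting a premise already refutes the conclusion. Structural rules (weakening) are trivial, since adding unwitnessed formulas to an anti-sequent preserves its being refuted.

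For completeness---every refutable $\mathcal{L}$-anti-sequent $\mathfrak{A}$ is derivable---I would induct on the total number of connective occurrences in $\mathfrak{A}$. Fix an interpretation $I$ refuting $\mathfrak{A}$. If $\mathfrak{A}$ is atomic, then $I$ induces a value assignment to the constants occurring in $\mathfrak{A}$ witnessing that $\mathfrak{A}$ is an axiom of $\mathsf{R}_\mathcal{L}$ (up to discarding components, which is legitimate since weakening is admissible). Otherwise pick an occurrence of a compound formula $\conn(\psi_1,\dots,\psi_k)$ in some component and apply, bottom-up, the $\mathsf{R}_\mathcal{L}$-rule instance for $\conn$ at that component that matches the values $I$ assigns to $\psi_1,\dots,\psi_k$; by the partial-normal-form correspondence the premise(s) of this instance are again refuted by $I$, and each premise has strictly fewer connective occurrences than $\mathfrak{A}$, so the induction hypothesis makes it derivable and hence $\mathfrak{A}$ derivable.

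The main obstacle is the rule-by-rule bookkeeping that links refutability to the logical rules of $\mathsf{R}_\mathcal{L}$: one must check that the rule schemata extracted from the Bogojeski and Tompits partial normal forms are sound for refutability and, in the \emph{bottom-up} direction used for completeness, that for each refuting assignment to the immediate subformulas there is a matching rule instance whose premises remain refuted---this is the anti-sequent analogue of the construction underlying Proposition~\ref{thm:compStandard}. This, together with the admissibility of weakening and the precise formulation of the atomic axioms, is where the real work lies; it is routine but lengthy in the general case, and for the concrete logics $\logicP$, $\logicLP$, and $\logicFour$ treated in Section~\ref{sec:calcLP} it reduces to inspecting finitely many connectives.
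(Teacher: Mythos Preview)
The paper does not prove this proposition at all: it is quoted verbatim from Bogojeski and Tompits~\cite{bogo-tompits20} and serves as a black box, exactly like Proposition~\ref{thm:compStandard} is quoted from Zach. There is therefore no ``paper's own proof'' to compare against; your sketch is a plausible reconstruction of the argument one would expect to find in the cited source, following the standard soundness-by-induction-on-derivations / completeness-by-induction-on-formula-complexity template for many-valued (anti-)sequent calculi.

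One genuine slip in your reconstruction concerns weakening. Weakening is \emph{not} a sound rule for anti-sequents and is not part of $\mathsf{R}_\mathcal{L}$: if $I$ refutes $\Gamma_1\nmid\cdots\nmid\Gamma_n$ and you add an arbitrary $\varphi$ to component $i$, nothing prevents $v^I_\mathcal{L}(\varphi)=t_i$, which destroys refutability. (Compare Figures~\ref{fig:pmCalcNegative} and~\ref{fig:fCalcNegative}, which contain no weakening rules, with Figures~\ref{fig:pmCalcPositive} and~\ref{fig:fCalcPositive}, which do.) Your soundness step should therefore simply omit weakening, and in the completeness base case you do not need it either: an atomic anti-sequent refuted by some $I$ already \emph{is} an axiom of $\mathsf{R}_\mathcal{L}$, since any constant lying in the intersection of all components would have to take \emph{some} truth value under $I$ and thereby witness one component. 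With that correction the outline stands.
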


We are now in a position to define sequents capturing minimal entailment.

\begin{definition}
An $\MEL$-\emph{sequent} is defined as a quadruple of the form $\Sigma ; \Gamma \minSequent \Delta ; \Theta$, where $\Sigma, \Theta \subseteq \const$, and $\Gamma$ and $\Delta$ are $\mathcal{L}$-theories.

An $\MEL$-sequent $\Sigma ; \Gamma \minSequent \Delta ; \Theta$ is \emph{true} if, for every $\mathcal{L}$-interpretation $I$, 
if $I$ is an $(\mathcal{I};\Theta \cup \Sigma)$-minimal $\mathcal{L}$-model of $\Gamma$ such that for all $\psi \in \Sigma$, $v^I_{\mathcal{L}}(\psi) \in \mathcal{I}$ holds, then $I$ is an $\Lm$-model of some $\varphi \in \Delta$.
\end{definition} 

The connection between $\MEL$-sequents and the consequence relation $\models^{\mathcal{I}}_\mathcal{L}$ is established through the following theorem, whose proof is straightforward.
\begin{theorem}
Let $\Gamma$ and $\Delta$ be $\mathcal{L}$-theories. Then, $\Gamma \models^{\mathcal{I}}_\mathcal{L} \Delta$ iff $\emptyset ; \Gamma \Rightarrow_{\mathcal{L}}^{\mathcal{I}} \Delta ; \mathit{Var}(\Gamma \cup \Delta)$ is true, where $\mathit{Var}(\Gamma \cup \Delta)$ is the set of all propositional constants appearing in $\Gamma$ or $\Delta$.
\end{theorem}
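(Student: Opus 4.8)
The plan is to notice that, once $\Sigma=\emptyset$, the truth of the sequent $\emptyset;\Gamma\minSequent\Delta;\mathit{Var}(\Gamma\cup\Delta)$ and the relation $\Gamma\minModels\Delta$ assert the very same implication, ``$I$ is an $\mathcal{L}$-model of some $\varphi\in\Delta$'', but range over two \emph{a priori} different classes of interpretations: the $(\mathcal{I};\Theta_0)$-minimal $\mathcal{L}$-models of $\Gamma$ in the first case and the $\mathcal{I}$-minimal (that is, $(\mathcal{I};\const)$-minimal) ones in the second, where I write $\Theta_0:=\mathit{Var}(\Gamma\cup\Delta)$. (Here I use that for $\Sigma=\emptyset$ the side condition on $\Sigma$ in the definition of sequent truth is vacuous and $\Theta\cup\Sigma=\Theta_0$.) The whole argument then rests on one simple fact---whether an interpretation $I$ is an $\mathcal{L}$-model of $\Gamma$, and whether it is an $\mathcal{L}$-model of some $\varphi\in\Delta$, depends only on $I|_{\Theta_0}$---together with two lemmas bridging the two notions of minimality.

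The first lemma (\emph{restriction}) states that every $(\mathcal{I};\const)$-minimal $\mathcal{L}$-model of $\Gamma$ is also $(\mathcal{I};\Theta_0)$-minimal. I would prove it by contraposition: if some $\mathcal{L}$-model $J$ of $\Gamma$ satisfied $J<_{\mathcal{L}}^{\mathcal{I},\Theta_0}I$, then the interpretation agreeing with $J$ on $\Theta_0$ and with $I$ elsewhere is again an $\mathcal{L}$-model of $\Gamma$ (since $\mathit{Var}(\Gamma)\subseteq\Theta_0$) and lies strictly below $I$ in $<_{\mathcal{L}}^{\mathcal{I},\const}$, contradicting minimality. The second lemma (\emph{completion}) states that every $(\mathcal{I};\Theta_0)$-minimal $\mathcal{L}$-model $I$ of $\Gamma$ agrees on $\Theta_0$ with some $(\mathcal{I};\const)$-minimal $\mathcal{L}$-model $J$ of $\Gamma$; here I would take $J$ to coincide with $I$ on $\Theta_0$ and send every constant outside $\Theta_0$ to a fixed value in $\mathcal{V}\setminus\mathcal{I}$ (the case $\mathcal{I}=\mathcal{V}$ being trivial, as then every $\mathcal{L}$-model is $(\mathcal{I};\const)$-minimal), and then check that the $\mathcal{I}$-set of $J$ equals that of $I$ restricted to $\Theta_0$, so that any $\mathcal{L}$-model $J'$ of $\Gamma$ with $J'<_{\mathcal{L}}^{\mathcal{I},\const}J$ would give $J'<_{\mathcal{L}}^{\mathcal{I},\Theta_0}I$, contradicting the minimality of $I$.

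Granting the two lemmas, the equivalence follows directly. For the forward direction, assume $\Gamma\minModels\Delta$, take a $(\mathcal{I};\Theta_0)$-minimal $\mathcal{L}$-model $I$ of $\Gamma$, use completion to get a $(\mathcal{I};\const)$-minimal $\mathcal{L}$-model $J$ with $J|_{\Theta_0}=I|_{\Theta_0}$, which is an $\mathcal{L}$-model of some $\varphi\in\Delta$ by hypothesis; as the constants of $\varphi$ all lie in $\Theta_0$, $I$ is also an $\mathcal{L}$-model of $\varphi$. For the converse, assume the sequent is true, take a $(\mathcal{I};\const)$-minimal $\mathcal{L}$-model $I$ of $\Gamma$, apply restriction to conclude it is $(\mathcal{I};\Theta_0)$-minimal, and read off that it is an $\mathcal{L}$-model of some $\varphi\in\Delta$. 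I do not anticipate a genuine obstacle---this is why the theorem is labelled straightforward---the only points needing a little care are the verification inside the completion lemma that the completed interpretation is really $(\mathcal{I};\const)$-minimal and the degenerate case $\mathcal{I}=\mathcal{V}$; note also that $\Theta_0$ may well be infinite, but this plays no role anywhere.
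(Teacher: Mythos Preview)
Your proposal is correct. The paper itself does not give a proof of this theorem, merely remarking that it is ``straightforward'', so there is no detailed argument to compare against; your two lemmas (restriction and completion) make explicit exactly the routine verification the authors leave implicit, and the handling of the degenerate case $\mathcal{I}=\Vm$ is accurate.
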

We need one final definition towards defining our sequent systems for generalised minimal entailment:
\begin{definition}
For a many-valued logic $\mathcal{L}$, a set $\mathcal{I}\subseteq\Vm$ of truth values, and a truth-value $t \in \mathcal{V}_\mathcal{L}$, let $\unaryOp$ be a unary connective 
such that $v_\mathcal{L}(\unaryOp \ \varphi) = \mathbf{t}$ if $v_\mathcal{L}(\varphi) \in \mathcal{I}$, and $v_\mathcal{L}(\unaryOp \ \varphi) = \mathbf{f}$ otherwise. 
Furthermore, let $\unaryNotOp$ be the complementary connective such that $v_\mathcal{L}(\unaryNotOp \ \varphi) = \mathbf{t}$ if $v_\mathcal{L}(\varphi) \not\in \mathcal{I}$, and $v_\mathcal{L}(\unaryNotOp \ \varphi) = \mathbf{f}$ otherwise. 
Moreover, for a set $\Theta$ of propositional constants, let $\unaryOp  \Theta = \{  \unaryOp  p  \mid  p \in \Theta  \}$ and $\unaryNotOp \Theta = \{  \unaryNotOp  p  \mid  p \in \Theta  \}$.
\end{definition}

The motivation behind those connectives is that we want to be able fix the truth values of propositional constants. For example, if we require for an interpretation $I$ that it is a model of $\unaryOp  p$, then $p$ has to evaluate in $I$ to a truth value in $\mathcal{I}$.
We will assume that any of our inner logics contains such connectives, as we can always obtain corresponding rules for them in $\mathsf{S}_\mathcal{L}$ and $\mathsf{R}_\mathcal{L}$ using the constructions of Zach~\cite{zach1993} and Bogojeski and Tompits~\cite{bogo-tompits20}.

Having laid down the necessary concepts, we can now introduce the sequent-type calculus for minimal entailment.

\begin{definition}
The postulates of the calculus $\mathsf{ME}^\mathcal{I}_\mathcal{L}$ for minimal entailment consists of the postulates for the sequent calculus  
$\mathsf{S}_\mathcal{L}$, the postulates for the anti-sequent calculus $\mathsf{R}_\mathcal{L}$, and the additional inference rules for $\MEL$-sequents depicted in Figure~\ref{fig:genCalc}. 
\end{definition}

\begin{figure}[t!]
\hrule
\medskip
  \begin{center}
  \begin{minipage}[t]{0.40\textwidth}
\begin{prooftree}
\def\fCenter{\mbox{\ $\minSequent$\ }}
\AxiomC{$\Gamma, \unaryNotOp \ \Theta \dashv_\mathcal{L} \unaryOp \ q$}
\RightLabel{$(m_1)$}
\UnaryInfC{$q,\Sigma ; \Gamma \minSequent \Delta ; \Theta$}
\end{prooftree}
\end{minipage}
\begin{minipage}[t]{0.45\textwidth}
\begin{prooftree}
\def\fCenter{\mbox{\ $\minSequent$\ }}
\AxiomC{$\unaryOp \ \Sigma , \Gamma \vdash_\mathcal{L} \Delta$}
\RightLabel{$(m_2)$}
\UnaryInfC{$\Sigma ; \Gamma \minSequent \Delta ; \Theta$}
\end{prooftree}
\end{minipage}

\medskip

\begin{prooftree}
\def\fCenter{\mbox{\ $\minSequent$\ }}
\AxiomC{$q,\Sigma ; \Gamma \minSequent \Delta; \Theta$}
\AxiomC{$\Sigma ; \Gamma, \unaryNotOp \ q \minSequent \Delta; \Theta$}
\RightLabel{$(m_3)$}
\BinaryInfC{$\Sigma ; \Gamma \fCenter \Delta ; \Theta, q$}
\end{prooftree}

  \end{center}
\medskip  
where $\unaryNotOp \Theta = \{  \unaryNotOp  p \mid p \in \Theta \}$ and $\unaryOp  \Sigma = \{  \unaryOp  p \mid p \in \Sigma \}$ 
\vspace{0.8em}
\hrule
\caption{Additional rules of the sequent calculus $\mathsf{ME}^\mathcal{I}_\mathcal{L}$.}\label{fig:genCalc}
\end{figure}

The intuitive meaning of the inference rules $(m_1)$, $(m_2)$, and $(m_3)$ of Figure~\ref{fig:genCalc} is as follows:
If the premiss of rule $(m_1)$ is true, then there exists a model $I$ of $\Gamma$ where all elements of $\Theta$ and $q$ evaluate to truth values not in $\mathcal{L}$ under $I$. The model $I$ is clearly $(\mathcal{I} ; \Theta \cup \Sigma \cup \{q\})$-minimal and thus every model of $\Gamma$ where all elements of $\Theta$ and $q$ evaluate to truth values in $\mathcal{I}$ cannot be minimal. Hence, the sequent in the conclusion is vacuously true.
Rule $(m_2)$ basically states that consequences of the inner logic $\mathcal{L}$ are preserved under minimal entailment.
Lastly, rule $(m_3)$ allows to infer an $\MEL$-{sequent} by case distinction: the left premiss ensures that $\Delta$ holds in every $(\mathcal{I} ; \Theta \cup \Sigma)$-minimal model of $\Gamma$ in which $q$ evaluates to a truth value in $\mathcal{L}$, and the right premiss states that $\Delta$ holds in every $(\mathcal{I} ; \Theta \cup \Sigma)$-minimal model of $\Gamma$ in which $q$ does not evaluate to a truth value in $\mathcal{I}$. Thus, $q$ can be safely added to the set of constants to be minimised.

We next show the adequacy of our calculus. We start with the soundness of 
$\mathsf{ME}^\mathcal{I}_\mathcal{L}$.
\begin{theorem}[Soundness]
	If $\Sigma ; \Gamma \minSequent \Delta ; \Theta$ is provable in $\mathsf{ME}^\mathcal{I}_\mathcal{L}$, then it is true.
\end{theorem}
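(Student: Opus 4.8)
The plan is to prove soundness by induction on the length of the derivation of $\Sigma ; \Gamma \minSequent \Delta ; \Theta$ in $\mathsf{ME}^\mathcal{I}_\mathcal{L}$. Since the calculus combines the three subsystems $\mathsf{S}_\mathcal{L}$, $\mathsf{R}_\mathcal{L}$, and the new rules $(m_1)$, $(m_2)$, $(m_3)$, the base cases and the cases where the last inference belongs purely to $\mathsf{S}_\mathcal{L}$ or $\mathsf{R}_\mathcal{L}$ are discharged by Propositions~\ref{thm:compStandard} and the analogous result for $\mathsf{R}_\mathcal{L}$: such a derivation produces an assertional sequent $\Gamma'\vdash_\mathcal{L}\Delta'$ or an anti-sequent $\Gamma'\dashv_\mathcal{L}\Delta'$, and soundness of those calculi is exactly their respective completeness-and-soundness statement already cited. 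So the real work is the three cases corresponding to the last rule being $(m_1)$, $(m_2)$, or $(m_3)$, where the premiss is an anti-sequent (resp.\ sequent, resp.\ $\MEL$-sequents) that we may assume true.

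First I would handle $(m_2)$, which is the easiest. Suppose $\unaryOp \ \Sigma , \Gamma \vdash_\mathcal{L} \Delta$ is provable, hence (by Proposition~\ref{thm:compStandard}) $\unaryOp \ \Sigma \cup \Gamma \models_\mathcal{L} \Delta$. Take any $\mathcal{L}$-interpretation $I$ that is an $(\mathcal{I};\Theta\cup\Sigma)$-minimal model of $\Gamma$ with $v^I_\mathcal{L}(\psi)\in\mathcal{I}$ for all $\psi\in\Sigma$. By the defining property of the connective $\unaryOp$, the latter condition says exactly that $v^I_\mathcal{L}(\unaryOp\,\psi)=\mathbf{t}$ for each $\psi\in\Sigma$, i.e.\ $I$ is a model of $\unaryOp\,\Sigma$; it is also a model of $\Gamma$, so by semantic consequence $I$ models some $\varphi\in\Delta$, which is what truth of the conclusion requires. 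For $(m_1)$, assume $\Gamma, \unaryNotOp \ \Theta \dashv_\mathcal{L} \unaryOp \ q$ is provable, so by the anti-sequent adequacy result there is an interpretation $J$ refuting it: $J$ is a model of $\Gamma\cup\unaryNotOp\,\Theta$ but $v^J_\mathcal{L}(\unaryOp\,q)\neq\mathbf{t}$, hence $v^J_\mathcal{L}(\unaryOp\,q)=\mathbf{f}$, i.e.\ $v^J_\mathcal{L}(q)\notin\mathcal{I}$ and $v^J_\mathcal{L}(p)\notin\mathcal{I}$ for every $p\in\Theta$. Now take any $I$ satisfying the hypothesis of the conclusion's truth condition: $I$ is an $(\mathcal{I};\Theta\cup\Sigma\cup\{q\})$-minimal model of $\Gamma$ with $v^I_\mathcal{L}(q)\in\mathcal{I}$ and $v^I_\mathcal{L}(\psi)\in\mathcal{I}$ for all $\psi\in\Sigma$. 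But then $J<_\mathcal{L}^{\mathcal{I},\Theta\cup\Sigma\cup\{q\}}I$, since $\{p\in\Theta\cup\Sigma\cup\{q\}\mid v^J_\mathcal{L}(p)\in\mathcal{I}\}$ omits $q$ while the corresponding set for $I$ contains it — one must check the inclusion in the other direction too, namely that $J$ has no such constant outside $I$'s set; here I would need to be slightly careful, but the intended reading is that the refuting $J$ witnesses non-minimality of $I$, contradicting the hypothesis, so the conclusion is vacuously true. (If the simple inclusion does not hold outright, one descends from $J$ to a $\leq$-minimal model below it, which still lies strictly below $I$.)

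For $(m_3)$, assume both premisses $q,\Sigma ; \Gamma \minSequent \Delta; \Theta$ and $\Sigma ; \Gamma, \unaryNotOp \ q \minSequent \Delta; \Theta$ are true, and let $I$ be an $(\mathcal{I};\Theta\cup\{q\}\cup\Sigma)$-minimal model of $\Gamma$ with $v^I_\mathcal{L}(\psi)\in\mathcal{I}$ for all $\psi\in\Sigma$; we must show $I$ models some $\varphi\in\Delta$. Do a case split on $v^I_\mathcal{L}(q)$. If $v^I_\mathcal{L}(q)\in\mathcal{I}$, then $I$ satisfies the hypothesis of the left premiss — it is an $(\mathcal{I};\Theta\cup(\Sigma\cup\{q\}))$-minimal model of $\Gamma$ with all of $\Sigma\cup\{q\}$ evaluating into $\mathcal{I}$ — so $I\models\varphi$ for some $\varphi\in\Delta$. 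If $v^I_\mathcal{L}(q)\notin\mathcal{I}$, then $v^I_\mathcal{L}(\unaryNotOp\,q)=\mathbf{t}$, so $I$ is a model of $\Gamma\cup\{\unaryNotOp\,q\}$; moreover minimality of $I$ with respect to $\Theta\cup\{q\}\cup\Sigma$ implies minimality with respect to $\Theta\cup\Sigma$ among models of $\Gamma\cup\{\unaryNotOp\,q\}$ (any strictly smaller such model would also evaluate $q$ outside $\mathcal{I}$ and hence be strictly smaller for the larger index set too), so the right premiss applies and again $I\models\varphi$ for some $\varphi\in\Delta$. In both cases the conclusion holds.

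The main obstacle is the $(m_1)$ case: one must convert a single refuting interpretation $J$ of the anti-sequent into a genuine witness that every candidate $I$ fails $(\mathcal{I};\Theta\cup\Sigma\cup\{q\})$-minimality. This requires matching the sets $\{p\in\Theta\cup\Sigma\cup\{q\}\mid v^\cdot_\mathcal{L}(p)\in\mathcal{I}\}$ correctly — ensuring the strict inequality $J<I$ rather than mere incomparability — and, if $J$ itself is not $\leq$-minimal below $I$, passing to a model that is (which exists because the $\leq$-order restricted to the finitely many relevant constants is well-founded). The $(m_3)$ case's subtlety, that minimality is preserved when the extra constant $q$ is forced outside $\mathcal{I}$ via $\unaryNotOp\,q$, is routine once stated carefully. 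Everything else reduces to the already-cited adequacy of $\mathsf{S}_\mathcal{L}$ and $\mathsf{R}_\mathcal{L}$ together with the defining truth conditions of $\unaryOp$ and $\unaryNotOp$.
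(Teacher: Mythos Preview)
Your approach is the paper's: verify that each of $(m_1)$, $(m_2)$, $(m_3)$ preserves truth, using the cited adequacy of $\mathsf{S}_\mathcal{L}$ and $\mathsf{R}_\mathcal{L}$ for the premisses. Your handling of $(m_2)$ and $(m_3)$ matches the paper essentially line for line.

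The only real issue is the hedge in your $(m_1)$ case. The strict inequality $J<_\mathcal{L}^{\mathcal{I},\Theta\cup\Sigma\cup\{q\}}I$ holds \emph{directly}, and you should simply verify it rather than defer: since $J$ models $\unaryNotOp\,\Theta$ and $v^J_\mathcal{L}(q)\notin\mathcal{I}$, the set $\{p\in\Theta\cup\Sigma\cup\{q\}\mid v^J_\mathcal{L}(p)\in\mathcal{I}\}$ is contained in $\Sigma$; meanwhile the hypothesis on $I$ forces $\Sigma\cup\{q\}\subseteq\{p\in\Theta\cup\Sigma\cup\{q\}\mid v^I_\mathcal{L}(p)\in\mathcal{I}\}$. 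Hence the $J$-set lies inside $\Sigma$ while the $I$-set contains $\Sigma\cup\{q\}$, so $J\leq I$ with $q$ witnessing strictness. This is exactly the paper's argument. Your proposed backup---descending from $J$ to a $\leq$-minimal model---would \emph{not} rescue the situation if it were needed: if $J$ and $I$ were genuinely incomparable, a model minimal below $J$ has no reason to lie below $I$. So drop the descent remark and just carry out the two-line inclusion check.
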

\begin{proof}
	The proof proceeds by showing the correctness of each rule.
	
	 We start with rule $(m_1)$. Suppose (i) its premiss $\Gamma, \unaryNotOp  \Theta \dashv_\mathcal{L} \unaryOp  q$ is refutable but (ii)~its conclusion $q,\Sigma ; \Gamma \minSequent \Delta ; \Theta$ is not true. By~(ii), there is an $(\mathcal{I} ; \Theta \cup \Sigma \cup \{q\})$-minimal model $I$ of $\Gamma$ such that for all $p \in \Sigma \cup \{q\}$, $v_\mathcal{L}(p) \in \mathcal{I}$. 
	 Similarly, by (i), there exists a model $J$ of $\Gamma \cup \unaryNotOp \ \Theta$ such that $v^J_\mathcal{L}(\unaryOp \ q) \not\in \mathcal{V}_\mathcal{L}^+$, or, equivalently, $v^J_\mathcal{L}(q) \not\in \mathcal{I}$. Trivially, $J$ is a model of $\Gamma$ and of $\unaryNotOp \ \Theta$, and $J \models_\Lm \unaryNotOp \ \Theta$ implies for all $p \in \Theta$, $v^J_\mathcal{L}(p) \not\in \mathcal{I}$. Now, since all elements of $\Sigma \cup \{q\}$ evaluate to a truth value in $\mathcal{I}$ under $I$ and all elements of $\Theta \cup \{q\}$ do not evaluate to a truth value in $\mathcal{I}$ under $J$, we have $J \leq^{\mathcal{I};{\Theta \cup \Sigma \cup \{q\}}}_\mathcal{L} I$. Furthermore, since $v^J_\mathcal{L}(q) \not\in \mathcal{I}$ but $v^I_\mathcal{L}(q) \in \mathcal{I}$, it even holds that $J <^{\mathcal{I};{\Theta \cup \Sigma \cup \{q\}}}_\mathcal{L} I$, which contradicts that $I$ is an $(\mathcal{I};\Theta \cup \Sigma \cup \{q\})$-minimal model of $\Gamma$. Hence, (ii) cannot be the case and the rule is indeed correct.

	 The correctness of rule $(m_2)$ is immediate, since any $(\mathcal{I};\Theta \cup \Sigma)$-minimal model $I$ of $\Gamma$ for which all elements of $\Sigma$ evaluate to truth values in $\mathcal{I}$ is trivially an $\mathcal{L}$-model of $\Gamma$. From the sequent in the premiss, it then follows that $I \models_\mathcal{L} \varphi$, for some $\varphi \in \Delta$. Hence, $\Sigma ; \Gamma \minSequent \Delta ; \Theta$ is true.
	 
	 To show the soundness of rule $(m_3)$, suppose that both sequents in its premiss are true. Furthermore, consider an $(\mathcal{I};\Theta\cup \{q\} \cup \Sigma )$-minimal model $I$ of $\Gamma$ where all elements of $\Sigma$ evaluate to truth values in $\mathcal{I}$.  We distinguish two cases: either
		(i) $v^I_\mathcal{L}(q) \in \mathcal{I}$ or (ii)~$v^I_\mathcal{L}(q) \not\in \mathcal{I}$. Suppose (i) holds. Then, $I$ is a $(\Theta \cup \Sigma \cup \{q\})$-minimal model of $\Gamma$ where all elements of $\Sigma \cup \{q\}$ evaluate to truth values in $\mathcal{I}$. Since $I$ is an $(\mathcal{I};\Theta \cup \Sigma \cup \{q\})$-minimal model and $v_\mathcal{L}^I(q) \in \mathcal{I}$, it is also an $(\mathcal{I};\Theta \cup \Sigma)$-minimal model of $\Gamma$, and because $q,\Sigma ; \Gamma \minSequent \Delta; \Theta$ holds, $I \models_\mathcal{L} \varphi$ follows, for some $\varphi \in \Sigma$. So, in case of~(i), the conclusion of the rule is true.
		
		It remains to consider case~(ii). Since $q$ does not evaluate to any truth value in $\mathcal{I}$, $I \models_\mathcal{L} \unaryNotOp \ q$ holds by definition. So, $I$ is an $(\mathcal{I};\Theta \cup \Sigma \cup \{q\})$-minimal model of $\Gamma \cup \{\neg \mathbf{I} q\}$ such that all elements of $\Sigma$ evaluate to truth values in $\mathcal{I}$, and thus also an $(\mathcal{I};\Theta \cup \Sigma)$-minimal model. Since $\Sigma ; \Gamma, \unaryNotOp \ q \minSequent \Delta; \Theta$ is true, it follows that for some $\varphi \in \Delta$, $I \models_\mathcal{L} \varphi$. Therefore, the conclusion of the rule also holds.
\end{proof}

\begin{theorem}[Completeness]
If $\Sigma ; \Gamma \minSequent \Delta ; \Theta$ is true, then it is provable in $\mathsf{ME}^\mathcal{I}_\mathcal{L}$.
\end{theorem}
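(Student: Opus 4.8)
The plan is to prove completeness by contraposition: assuming $\Sigma ; \Gamma \minSequent \Delta ; \Theta$ is \emph{not} provable in $\mathsf{ME}^\mathcal{I}_\mathcal{L}$, I will construct an $\mathcal{L}$-interpretation witnessing that the sequent is not true, i.e.\ an $(\mathcal{I}; \Theta\cup\Sigma)$-minimal model $I$ of $\Gamma$ with $v^I_\mathcal{L}(\psi)\in\mathcal{I}$ for all $\psi\in\Sigma$, yet $I\not\models_\mathcal{L}\varphi$ for every $\varphi\in\Delta$. The key observation is that the calculus lets us shrink $\Theta$: each application of $(m_3)$ (read bottom-up) moves one constant $q$ out of the last component and creates two branches, one adding $q$ to $\Sigma$ and one adding $\mathbf{C}^\mathcal{I}_\mathcal{L}\,q$ to $\Gamma$. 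So the natural strategy is induction on $|\Theta|$.

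\medskip

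\noindent\textbf{Base case $\Theta=\emptyset$.} Here an $\MEL$-sequent $\Sigma;\Gamma\minSequent\Delta;\emptyset$ is true iff every $\mathcal{L}$-model $I$ of $\Gamma$ with $v^I_\mathcal{L}(\psi)\in\mathcal{I}$ for all $\psi\in\Sigma$ satisfies some $\varphi\in\Delta$ --- the $(\mathcal{I};\Sigma)$-minimality condition is automatically discharged because when $\Theta=\emptyset$ and all of $\Sigma$ evaluates into $\mathcal{I}$, the set $\{p\in\Sigma\mid v^I_\mathcal{L}(p)\in\mathcal{I}\}$ is \emph{all} of $\Sigma$, which is the maximal possible value of the comparison set, so no model can be strictly below $I$ in $<^{\mathcal{I},\Sigma}_\mathcal{L}$. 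Hence truth of $\Sigma;\Gamma\minSequent\Delta;\emptyset$ is equivalent to $\mathbf{I}^\mathcal{I}_\mathcal{L}\,\Sigma,\Gamma\models_\mathcal{L}\Delta$, and by Proposition~\ref{thm:compStandard} the sequent $\mathbf{I}^\mathcal{I}_\mathcal{L}\,\Sigma,\Gamma\vdash_\mathcal{L}\Delta$ is provable in $\mathsf{S}_\mathcal{L}$; one application of $(m_2)$ then yields provability in $\mathsf{ME}^\mathcal{I}_\mathcal{L}$.

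\medskip

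\noindent\textbf{Inductive step.} Let $\Theta=\Theta'\cup\{q\}$ with $q\notin\Theta'$ and suppose $\Sigma;\Gamma\minSequent\Delta;\Theta',q$ is true. Consider the two premisses of $(m_3)$, namely $q,\Sigma;\Gamma\minSequent\Delta;\Theta'$ and $\Sigma;\Gamma,\mathbf{C}^\mathcal{I}_\mathcal{L}\,q\minSequent\Delta;\Theta'$. I claim both are true. For the second, any $(\mathcal{I};\Theta'\cup\Sigma)$-minimal model of $\Gamma\cup\{\mathbf{C}^\mathcal{I}_\mathcal{L}\,q\}$ satisfying $\Sigma\subseteq\mathcal{I}$ has $v^I_\mathcal{L}(q)\notin\mathcal{I}$, hence it is also $(\mathcal{I};\Theta'\cup\{q\}\cup\Sigma)$-minimal among models of $\Gamma$ (adding a constant whose value is already out of $\mathcal{I}$ cannot create a strictly smaller model), so truth of the conclusion applies. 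For the first premiss I argue similarly: an $(\mathcal{I};\Theta'\cup\Sigma\cup\{q\})$-minimal model of $\Gamma$ with $v^I_\mathcal{L}$ sending $\Sigma\cup\{q\}$ into $\mathcal{I}$ is $(\mathcal{I};\Theta'\cup\{q\}\cup\Sigma)$-minimal, so again the conclusion's truth applies. By the induction hypothesis both premisses are provable, and $(m_3)$ gives provability of $\Sigma;\Gamma\minSequent\Delta;\Theta',q$ --- \emph{unless} one of the two premisses fails to be true, which would contradict the claim just made; the subtlety to check carefully is exactly this minimality-preservation argument, because "most consistent" minimality is not monotone in $\Theta$ in general.

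\medskip

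\noindent\textbf{Where the real work is.} The genuine obstacle is handling the case distinction cleanly when neither premiss of $(m_3)$ is directly available and we must instead reach for $(m_1)$: if the sequent $\Sigma;\Gamma\minSequent\Delta;\Theta$ is true \emph{vacuously} because there is \emph{no} $(\mathcal{I};\Theta\cup\Sigma)$-minimal model of $\Gamma$ with $\Sigma$ into $\mathcal{I}$, one must exhibit, for some constant $q$, a model $J$ of $\Gamma\cup\mathbf{C}^\mathcal{I}_\mathcal{L}\,\Theta$ with $v^J_\mathcal{L}(q)\notin\mathcal{I}$ to fire $(m_1)$ --- and proving such a $J$ exists from vacuity requires a descent/well-foundedness argument on the finite partial order $<^{\mathcal{I},\Theta\cup\Sigma}_\mathcal{L}$. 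So the proof should be organised as: first reduce (via repeated $(m_3)$) to the case $\Theta=\emptyset$; within that reduction, dispatch the vacuous sub-cases by $(m_1)$ using finiteness of the value set to find the required refuting interpretation; then close the base case by $(m_2)$ and Proposition~\ref{thm:compStandard}. The main difficulty is bookkeeping the minimality condition through these transformations and ensuring the $(m_1)$-applications are always licensed when vacuity occurs.
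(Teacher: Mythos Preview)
Your inductive step is essentially the paper's, and it is correct: for $\Theta=\Theta'\cup\{q\}$, both premisses of $(m_3)$ inherit truth from the conclusion because $\Theta'\cup\Sigma\cup\{q\}=\Theta\cup\Sigma$, and a model with $v^I_\mathcal{L}(q)\notin\mathcal{I}$ that is $(\mathcal{I};\Theta'\cup\Sigma)$-minimal among models of $\Gamma\cup\{\unaryNotOp q\}$ is also $(\mathcal{I};\Theta\cup\Sigma)$-minimal among models of $\Gamma$.

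The gap is in your base case. You write that when $\Theta=\emptyset$ and $I$ sends all of $\Sigma$ into $\mathcal{I}$, the comparison set $\{p\in\Sigma\mid v^I_\mathcal{L}(p)\in\mathcal{I}\}$ equals $\Sigma$, ``which is the maximal possible value of the comparison set, so no model can be strictly below $I$''. This reverses the order: having the \emph{maximal} comparison set places $I$ at the \emph{top} of $\leq_\mathcal{L}^{\mathcal{I},\Sigma}$, so other models of $\Gamma$ with strictly smaller sets are strictly \emph{below} $I$. Concretely, if some model $J$ of $\Gamma$ has $v^J_\mathcal{L}(q)\notin\mathcal{I}$ for some $q\in\Sigma$, then $J<_\mathcal{L}^{\mathcal{I},\Sigma}I$ and $I$ is \emph{not} $(\mathcal{I};\Sigma)$-minimal. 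In that situation the sequent $\Sigma;\Gamma\minSequent\Delta;\emptyset$ may be vacuously true while $\unaryOp\Sigma,\Gamma\models_\mathcal{L}\Delta$ fails, so $(m_2)$ alone cannot close the base case.

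The paper fixes this by a simple dichotomy in the base case: either (a) there exists $q\in\Sigma$ with $\Gamma\not\models_\mathcal{L}\unaryOp q$, so $\Gamma\dashv_\mathcal{L}\unaryOp q$ is provable in $\mathsf{R}_\mathcal{L}$ and a single application of $(m_1)$ (with $\Theta=\emptyset$) yields the sequent; or (b) every model of $\Gamma$ sends every $q\in\Sigma$ into $\mathcal{I}$, in which case all models of $\Gamma$ have the \emph{same} comparison set $\Sigma$, every such model is trivially $(\mathcal{I};\Sigma)$-minimal, truth of the sequent gives $\unaryOp\Sigma,\Gamma\models_\mathcal{L}\Delta$, and $(m_2)$ applies. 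No descent or well-foundedness argument is required; the case split is immediate. You anticipated the need for $(m_1)$ in your final paragraph but located it in the wrong place: it is invoked only in the base case, not during the $(m_3)$ reduction, and the ``vacuity'' witness is supplied directly by the hypothesis of case~(a), not by chasing a chain in the order.
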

\begin{proof}
	Suppose $S = \Sigma ; \Gamma \minSequent \Delta ; \Theta$ is true. We show the result by induction on $|\Theta|$.
	
\smallskip\noindent
{\sc Induction Base.}
	Assume $|\Theta| = 0$, \iec $\Theta = \emptyset$. If there is some $q \in \Sigma$ such that $\Gamma \dashv_\mathcal{L} \unaryOp \ q$ is refutable, then $S$ is provable by a single application of rule $(m_1)$. So, suppose that $\Gamma \vdash_\mathcal{L} \unaryOp \ q$ is valid, for any $q \in \Sigma$. Then, any model $I$ of $\Sigma \cup \Gamma$ is an $(\mathcal{I};\Theta \cup \Sigma)$-minimal model, since $\Theta = \emptyset$ and, by assumption, all elements in $\Sigma$ have to evaluate to truth values in $\mathcal{I}$ in every model of $\Gamma$. Now, since $I$ is minimal and $S$ is true by hypothesis, $I \models_\mathcal{L} \varphi$, for some $\varphi \in \Delta$. Hence, the $\mathcal{L}$-sequent $\unaryOp \ \Sigma, \Gamma \vdash_\mathcal{L} \Delta$ is valid and thus provable by the completeness of $\mathsf{S}_\Lm$.
A single application of rule $(m_2)$ then yields a proof of $S$.
	
\smallskip\noindent
{\sc Induction Step.}	Assume $|\Theta| > 0$ and that all true $\MEL$-sequents $\Sigma' ; \Gamma' \minSequent \Delta' ; \Theta'$ with $|\Theta| = |\Theta'| + 1 $ are provable in $\mathsf{ME}^\mathcal{I}_\mathcal{L}$. 
Suppose that $\Theta = \Theta' \cup \{p\}$, for some propositional constant $p$ such that $p \not\in \Theta'$. We show that $S_1: =  p, \Sigma ; \Gamma \minSequent \Delta ; \Theta'$ and $S_2 :=  \Sigma ; \Gamma, \unaryNotOp \ p \minSequent \Delta ; \Theta'$ are both true and thus, by induction hypothesis, also provable in $\mathsf{ME}^\mathcal{I}_\mathcal{L}$. Let $I$ be any $(\Theta' \cup \Sigma \cup \{p\})$-minimal model of $\Gamma$ where all elements of $\Sigma \cup \{p\}$ evaluate to truth values in $\mathcal{I}$. Trivially, $I$ is an $(\mathcal{I};\Theta \cup \Sigma)$-minimal model of $\Gamma$ where all elements of $\Sigma$ evaluate to truth values in $\mathcal{I}$. Since $S$ is true, it follows that $I \models_\mathcal{L} \varphi$, for some $\varphi \in \Delta$. Hence, $S_1$ is also true. On the other hand, suppose $I$ is an $(\mathcal{I};\Theta' \cup \Sigma)$-minimal model of $\Gamma \cup \unaryNotOp \ p$ where all elements of $\Sigma$ evaluate to truth values in $\mathcal{I}$. 
Now, $I \models_\mathcal{L} \unaryNotOp \ p$ implies $v^I_\mathcal{L}(p) \not\in \mathcal{I}$, and therefore $I$ is trivially also a $(\mathcal{I};\Theta \cup \Sigma)$-minimal model.
On the other hand, the truth of $S$ implies $I \models_\mathcal{L} \varphi$, for some $\varphi \in \Delta$, and thus $S_2$ is true as well. Since $S_1$ and $S_2$ are both true, and thus provable in in $\mathsf{ME}^\mathcal{I}_\mathcal{L}$ by induction hypothesis, a single application of rule $(m_3)$ yields a proof for $S$.
\end{proof}

\section{Calculi for Three- and Four-Valued Paraconsistent Logics}\label{sec:calcLP}

\begin{figure}[t!]
\hrule
\medskip
\begin{center}

\begin{minipage}[t]{0.30\textwidth}
\begin{prooftree}
\AxiomC{$\Gamma\mid\Delta\mid\Pi,\varphi$}
\RightLabel{$(\neg : \mathbf{f})^\vdash$}
\UnaryInfC{$\Gamma,\neg \varphi \mid\Delta\mid\Pi$}
\end{prooftree}
\end{minipage}
\begin{minipage}[t]{0.30\textwidth}
\begin{prooftree}
\AxiomC{$\Gamma\mid\Delta,\varphi\mid\Pi$}
\RightLabel{$(\neg : \mathbf{b})^\vdash$}
\UnaryInfC{$\Gamma\mid\Delta,\neg \varphi\mid\Pi$}
\end{prooftree}
\end{minipage}
\begin{minipage}[t]{0.30\textwidth}
\begin{prooftree}
\AxiomC{$\Gamma,\varphi\mid\Delta\mid\Pi$}
\RightLabel{$(\neg : \mathbf{t})^\vdash$}
\UnaryInfC{$\Gamma\mid\Delta\mid\Pi,\neg \varphi$}
\end{prooftree}
\end{minipage}

\medskip

\begin{minipage}[t]{0.45\textwidth}
\begin{prooftree}
\AxiomC{$\Gamma, \varphi, \psi \mid \Delta \mid \Pi $}
\RightLabel{$(\land : \mathbf{f})^\vdash$}
\UnaryInfC{$\Gamma, \varphi \land \psi \mid \Delta \mid \Pi $}
\end{prooftree}
\end{minipage}
\begin{minipage}[t]{0.45\textwidth}
\begin{prooftree}
\AxiomC{$\Gamma \mid \Delta \mid \Pi, \varphi$}
\AxiomC{$\Gamma \mid \Delta \mid \Pi, \psi$}
\RightLabel{$(\land : \mathbf{t})^\vdash$}
\BinaryInfC{$\Gamma \mid \Delta \mid \Pi, \varphi \land \psi  $}
\end{prooftree}
\end{minipage}

\begin{prooftree}
\AxiomC{$\Gamma \mid \Delta, \varphi, \psi \mid \Pi $}
\AxiomC{$\Gamma \mid \Delta, \varphi \mid \Pi, \varphi $}
\AxiomC{$\Gamma \mid \Delta, \psi \mid \Pi, \psi $}
\RightLabel{$(\land : \mathbf{b})^\vdash$}
\TrinaryInfC{$\Gamma \mid \Delta, \varphi \land \psi \mid \Pi $}
\end{prooftree}

\begin{minipage}[t]{0.45\textwidth}
\begin{prooftree}
\AxiomC{$\Gamma \mid \Delta, \varphi \mid \Pi, \varphi $}
\AxiomC{$\Gamma, \psi \mid \Delta \mid \Pi $}
\RightLabel{$(\supset : \mathbf{f})^\vdash$}
\BinaryInfC{$\Gamma, \varphi \supset \psi \mid \Delta \mid \Pi $}
\end{prooftree}
\end{minipage}
\begin{minipage}[t]{0.45\textwidth}
\begin{prooftree}
\AxiomC{$\Gamma \mid \Delta, \varphi \mid \Pi, \varphi $}
\AxiomC{$\Gamma \mid \Delta, \psi \mid \Pi $}
\RightLabel{$(\supset : \mathbf{b})^\vdash$}
\BinaryInfC{$\Gamma\mid \Delta, \varphi \supset \psi  \mid \Pi $}
\end{prooftree}
\end{minipage}

\begin{prooftree}
\AxiomC{$\Gamma, \varphi \mid \Delta \mid \Pi, \psi $}
\RightLabel{$(\supset : \mathbf{t})^\vdash$}
\UnaryInfC{$\Gamma\mid \Delta \mid \Pi, \varphi \supset \psi  $}
\end{prooftree}

\begin{minipage}[t]{0.30\textwidth}
\begin{prooftree}
\AxiomC{$\Gamma\mid \Delta \mid \Pi $}
\RightLabel{$(w : \mathbf{f})^\vdash$}
\UnaryInfC{$\Gamma, \varphi \mid \Delta \mid \Pi $}
\end{prooftree}
\end{minipage}
\begin{minipage}[t]{0.30\textwidth}
\begin{prooftree}
\AxiomC{$\Gamma\mid \Delta \mid \Pi $}
\RightLabel{$(w : \mathbf{b})^\vdash$}
\UnaryInfC{$\Gamma\mid \Delta, \varphi  \mid \Pi $}
\end{prooftree}
\end{minipage}
\begin{minipage}[t]{0.30\textwidth}
\begin{prooftree}
\AxiomC{$\Gamma\mid \Delta \mid \Pi $}
\RightLabel{$(w : \mathbf{t})^\vdash$}
\UnaryInfC{$\Gamma\mid \Delta \mid \Pi, \varphi $}
\end{prooftree}
\end{minipage}

\medskip

\end{center}
\hrule
\caption{Rules of the sequent calculus $\mathsf{S}_\logicP$.}\label{fig:pmCalcPositive}
\end{figure}

From the results in the previous section, we can obtain now concrete calculi for 
axiomatising $\modelsLPm$, $\modelsPm$, $\modelsFourMa$, and $\modelsFourMb$.
We start with the three-valued case.
Since $\logicLP$ is a sublogic of $\logicP$, we only deal with the case of 
$\modelsPm$.

To begin with, following from the general construction of Zach~\cite{zach1993} and Bogojeski and Tompits~\cite{bogo-tompits20}, we obtain a sequent calculus $\mathsf{S}_\logicP$ and an anti-sequent calculus $\mathsf{R}_\logicP$ for $\logicP$ as follows: 

\begin{enumerate}[(i)]
\item the axioms of $\mathsf{S}_\logicP$ are $\logicP$-sequents of the form
$\Gamma_1,\varphi \mid \Gamma_2,\varphi \mid \Gamma_3,\varphi$ and
	$\Gamma_1, \CFalse \mid \Gamma_2 \mid \Gamma_3$, and
the inference rules of $\mathsf{S}_\logicP$ are those depicted in Figure~\ref{fig:pmCalcPositive}; and

\item the axioms of $\mathsf{R}_\logicP$ are $\logicP$-anti-sequents of the form $\Gamma_1 \nmid \Gamma_2 \nmid \Gamma_3$, where $\Gamma_1, \Gamma_2, \Gamma_3$ are sets of propositional and logical constants such that $\Gamma_1 \cap \Gamma_2 \cap \Gamma_3 = \emptyset$ and $\CFalse \not\in \Gamma_1$, and the inference rules of $\mathsf{R}_\logicP$ are those depicted in Figure~\ref{fig:pmCalcNegative}.

\end{enumerate}

Note that the inference rules of $\mathsf{S}_\logicP$ and $\mathsf{R}_\logicP$ contain only those for the primitive logical connectives. Furthermore, the rules 
$(w : \mathbf{f})^\vdash$, $(w : \mathbf{b})^\vdash$, and $(w : \mathbf{t})^\vdash$ are called \emph{weakening rules}.

The intuition behind the postulates of $\mathsf{S}_\logicP$ and $\mathsf{R}_\logicP$ is the following: 
An axiom of $\mathsf{S}_\logicP$ of the form $\Gamma_1,\varphi \mid \Gamma_2,\varphi \mid \Gamma_3,\varphi$ simply expresses the three-valuedness of the logic $\logicP$, \iec that any formula $\varphi$ must have one of the three truth values $\mathbf{f}$, $\mathbf{b}$, or $\mathbf{t}$, while an axiom of the form $\Gamma_1, \CFalse \mid \Gamma_2 \mid \Gamma_3$ is trivially valid because the truth constant $\CFalse$ is always false. 
The axioms of $\mathsf{R}_\logicP$, on the other hand, represent basically the complementary situation of atomic $\logicP$-sequents, encoding a refuting interpretation. 
As for the inference rules of both $\mathsf{S}_\logicP$ and $\mathsf{R}_\logicP$, they intuitively express the truth-table conditions of the different connectives obtained from a specification in two-valued logic. For instance, the rule $(\supset : \mathbf{f})^\vdash$ expresses the semantic conditions when an implication $\varphi\supset\psi$ is false, which is the case when $\varphi$ has one of the designated truth values $\mathbf{b}$ or $\mathbf{t}$, and $\psi$ is false.
Note that the rules of $\mathsf{R}_\logicP$ are always unary as they intuitively correspond to the branches of a systematic search for countermodels in the standard sequent calculus. 
Roughly speaking, what is exhaustive search in the standard calculus amounts to nondeterminism in the anti-sequent calculus.

From the general construction of Zach~\cite{zach1993} and Bogojeski and Tompits~\cite{bogo-tompits20}, it follows that $\mathsf{S}_\logicP$ and $\mathsf{R}_\logicP$ are sound and complete, \iec a $\logicP$-sequent $\Gamma_1 \mid \Gamma_2 \mid \Gamma_3$ is valid iff it is provable in $\mathsf{S}_\logicP$, and a $\logicP$-anti-sequent $\Gamma_1 \nmid \Gamma_2 \nmid \Gamma_3$ is refutable iff it is provable in $\mathsf{R}_\logicP$.  

The calculus $\mathsf{ME}_\logicP$ for $\modelsPm$ comprises now the calculi $\mathsf{S}_\logicP$ and $\mathsf{R}_\logicP$, and the inference rules for $\MEL$-sequents as described in Figure~\ref{fig:genCalc}, setting $\mathcal{I}=\{\mathbf{b}\}$ and $\Lm=\logicP$. However, instead of the general rules $(m_1)$ and $(m_2)$, we may use the following versions which directly encode the semantics of the operators 
$\unaryOpPm$ and $\unaryNotOpPm$, instead of providing explicit inference rules for them in the calculi $\mathsf{R}_\logicP$ and $\mathsf{R}_\logicP$:
\vspace{-1ex}
\begin{center} 
\begin{minipage}[t]{0.40\textwidth}
\begin{prooftree}
\def\fCenter{\mbox{\ $\Rightarrow^{\{ \mathbf{b} \}}_{\logicP}$\ }}
\AxiomC{$\Gamma \nmid \Theta, \Pi, q \nmid \emptyset$}
\RightLabel{$(m_1')$}
\UnaryInfC{$q,\Sigma ; \Gamma, \unaryNotOpPm \ \Pi \Rightarrow^{\{ \mathbf{b} \}}_{\logicP} \Delta ; \Theta$}
\end{prooftree}
\end{minipage}
\begin{minipage}[t]{0.45\textwidth}
\begin{prooftree}
\def\fCenter{\mbox{\ $\Rightarrow^{\{ \mathbf{b} \}}_{\logicP}$\ }}
\AxiomC{$\Sigma, \Gamma \mid \Delta, \Pi \mid \Delta, \Sigma$}
\RightLabel{$(m_2')$}
\UnaryInfC{$\Sigma ; \Gamma, \unaryNotOpPm \ \Pi \Rightarrow^{\{ \mathbf{b} \}}_{\logicP} \Delta ; \Theta$}
\end{prooftree}
\end{minipage}
\end{center}

\bigskip

\begin{figure}[t!]
\hrule
\medskip
\begin{center}

\begin{minipage}[t]{0.30\textwidth}
\begin{prooftree}
\AxiomC{$\Gamma\nmid\Delta\nmid\Pi,\varphi$}
\RightLabel{$(\neg : \mathbf{f})^\dashv$}
\UnaryInfC{$\Gamma,\neg \varphi \nmid\Delta\nmid\Pi$}
\end{prooftree}
\end{minipage}
\begin{minipage}[t]{0.30\textwidth}
\begin{prooftree}
\AxiomC{$\Gamma\nmid\Delta,\varphi\nmid\Pi$}
\RightLabel{$(\neg : \mathbf{b})^\dashv$}
\UnaryInfC{$\Gamma\nmid\Delta,\neg \varphi\nmid\Pi$}
\end{prooftree}
\end{minipage}
\begin{minipage}[t]{0.30\textwidth}
\begin{prooftree}
\AxiomC{$\Gamma,\varphi\nmid\Delta\nmid\Pi$}
\RightLabel{$(\neg : \mathbf{t})^\dashv$}
\UnaryInfC{$\Gamma\nmid\Delta\nmid\Pi,\neg \varphi$}
\end{prooftree}
\end{minipage}

\medskip

\begin{minipage}[t]{0.30\textwidth}
\begin{prooftree}
\AxiomC{$\Gamma, \varphi, \psi \nmid \Delta \nmid \Pi $}
\RightLabel{$(\land : \mathbf{f})^\dashv$}
\UnaryInfC{$\Gamma, \varphi \land \psi \nmid \Delta \nmid \Pi $}
\end{prooftree}
\end{minipage}
\begin{minipage}[t]{0.30\textwidth}
\begin{prooftree}
\AxiomC{$\Gamma \nmid \Delta, \varphi, \psi \nmid \Pi$}
\RightLabel{$(\land : \mathbf{b}^1)^\dashv$}
\UnaryInfC{$\Gamma \nmid \Delta, \varphi \land \psi \nmid \Pi $}
\end{prooftree}
\end{minipage}
\begin{minipage}[t]{0.30\textwidth}
\begin{prooftree}
\AxiomC{$\Gamma \nmid \Delta, \varphi \nmid \Pi, \varphi $}
\RightLabel{$(\land : \mathbf{b}^2)^\dashv$}
\UnaryInfC{$\Gamma \nmid \Delta, \varphi \land \psi \nmid \Pi $}
\end{prooftree}
\end{minipage}

\medskip

\begin{minipage}[t]{0.30\textwidth}
\begin{prooftree}
\AxiomC{$\Gamma \nmid \Delta, \psi \nmid \Pi, \psi $}
\RightLabel{$(\land : \mathbf{b}^3)^\dashv$}
\UnaryInfC{$\Gamma \nmid \Delta, \varphi \land \psi \nmid \Pi $}
\end{prooftree}
\end{minipage}
\begin{minipage}[t]{0.30\textwidth}
\begin{prooftree}
\AxiomC{$\Gamma \nmid \Delta, \nmid \Pi, \varphi$}
\RightLabel{$(\land : \mathbf{t}^1)^\dashv$}
\UnaryInfC{$\Gamma \nmid \Delta, \nmid \Pi, \varphi \land \psi $}
\end{prooftree}
\end{minipage}
\begin{minipage}[t]{0.30\textwidth}
\begin{prooftree}
\AxiomC{$\Gamma \nmid \Delta, \nmid \Pi, \psi $}
\RightLabel{$(\land : \mathbf{t}^2)^\dashv$}
\UnaryInfC{$\Gamma \nmid \Delta \nmid \Pi, \varphi \land \psi $}
\end{prooftree}
\end{minipage}

\medskip

\begin{minipage}[t]{0.45\textwidth}
\begin{prooftree}
\AxiomC{$\Gamma \nmid \Delta, \varphi \nmid \Pi, \varphi $}
\RightLabel{$(\supset : \mathbf{f}^1)^\dashv$}
\UnaryInfC{$\Gamma, \varphi \supset \psi \nmid \Delta \nmid \Pi $}
\end{prooftree}
\end{minipage}
\begin{minipage}[t]{0.45\textwidth}
\begin{prooftree}
\AxiomC{$\Gamma, \psi \nmid \Delta \nmid \Pi $}
\RightLabel{$(\supset : \mathbf{f}^2)^\dashv$}
\UnaryInfC{$\Gamma, \varphi \supset \psi \nmid \Delta \nmid \Pi $}
\end{prooftree}
\end{minipage}

\medskip

\begin{minipage}[t]{0.32\textwidth}
\begin{prooftree}
\AxiomC{$\Gamma \nmid \Delta, \varphi \nmid \Pi, \varphi $}
\RightLabel{$(\supset : \mathbf{b}^1)^\dashv$}
\UnaryInfC{$\Gamma \nmid \Delta, \varphi \supset \psi \nmid \Pi $}
\end{prooftree}
\end{minipage}
\begin{minipage}[t]{0.32\textwidth}
\begin{prooftree}
\AxiomC{$\Gamma \nmid \Delta, \psi \nmid \Pi $}
\RightLabel{$(\supset : \mathbf{b}^2)^\dashv$}
\UnaryInfC{$\Gamma\nmid \Delta, \varphi \supset \psi  \nmid \Pi $}
\end{prooftree}
\end{minipage}
\begin{minipage}[t]{0.32\textwidth}
\begin{prooftree}
\AxiomC{$\Gamma, \varphi \nmid \Delta \nmid \Pi, \psi $}
\RightLabel{$(\supset : \mathbf{t})^\dashv$}
\UnaryInfC{$\Gamma \nmid \Delta \nmid \Pi, \varphi \supset \psi $}
\end{prooftree}
\end{minipage}

\end{center}
\hrule
\caption{Rules of the anti-sequent calculus $\mathsf{R}_\logicP$.}\label{fig:pmCalcNegative}
\end{figure}

Following from our results in Section~\ref{sec:genMethod}, the calculus $\mathsf{ME}_\logicP$ is sound and complete. Hence, we get the following corollary:

\begin{theorem}
	Let $\Gamma$ and $\Delta$ be $\mathcal{P}$-theories. Then, $\Gamma \modelsPm \Delta$ 
iff $\emptyset ; \Gamma \Rightarrow \Delta ; \mathit{Var}(\Gamma \cup \Delta)$ is provable in $\mathsf{ME}_\logicP$, where $\mathit{Var}(\Gamma \cup \Delta)$ is the set of propositional constants appearing in $\Gamma$ or $\Delta$.
\end{theorem}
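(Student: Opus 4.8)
The plan is to derive this theorem as a direct instance of the general adequacy results established in Section~\ref{sec:genMethod}, together with the soundness and completeness of the base calculi $\mathsf{S}_\logicP$ and $\mathsf{R}_\logicP$. First I would observe that, by the identification $\modelsPm \, = \, \models_{\logicP}^{\{\mathbf{b}\}}$ noted right after the definition of $\minModels$, we have that $\mathsf{ME}_\logicP$ is precisely the instance of $\mathsf{ME}^\mathcal{I}_\mathcal{L}$ obtained by setting $\mathcal{L} = \logicP$ and $\mathcal{I} = \{\mathbf{b}\}$. The Soundness and Completeness Theorems for $\mathsf{ME}^\mathcal{I}_\mathcal{L}$ therefore apply, giving that $\Sigma;\Gamma\minSequent\Delta;\Theta$ is provable in $\mathsf{ME}_\logicP$ if and only if it is true. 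Combining this with the Theorem connecting $\MEL$-sequents to $\models^{\mathcal{I}}_\mathcal{L}$, namely that $\Gamma \models^{\mathcal{I}}_\mathcal{L} \Delta$ iff $\emptyset;\Gamma\minSequent\Delta;\mathit{Var}(\Gamma\cup\Delta)$ is true, yields the desired equivalence: $\Gamma \modelsPm \Delta$ iff $\emptyset;\Gamma\Rightarrow\Delta;\mathit{Var}(\Gamma\cup\Delta)$ is provable in $\mathsf{ME}_\logicP$.

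The one genuine subtlety is that $\mathsf{ME}_\logicP$ as actually presented does not literally use the general rules $(m_1)$ and $(m_2)$, nor does it add explicit rules for the connectives $\unaryOpPm$ and $\unaryNotOpPm$ to $\mathsf{S}_\logicP$ and $\mathsf{R}_\logicP$; instead it uses the optimised rules $(m_1')$ and $(m_2')$ that fold the semantics of these operators directly into the sequent manipulations. So the second step I would carry out is to justify that this reformulated calculus proves exactly the same $\MEL$-sequents as the ``official'' instance of $\mathsf{ME}^\mathcal{I}_\mathcal{L}$. For this I would argue that a $\logicP$-anti-sequent premiss of the form $\Gamma, \unaryNotOpPm \Theta \dashv_\logicP \unaryOpPm q$ is refutable precisely when $\Gamma \nmid \Theta,\Pi,q \nmid \emptyset$ is (tracking that $v_\logicP(\unaryOpPm p) = \mathbf{t}$ exactly when $v_\logicP(p) = \mathbf{b}$, and dually for $\unaryNotOpPm$), and analogously that the $\logicP$-sequent $\unaryOpPm\Sigma,\Gamma \vdash_\logicP \Delta$ is valid precisely when $\Sigma,\Gamma \mid \Delta,\Pi \mid \Delta,\Sigma$ is — here one uses that $\Gamma \vdash_\logicP \Delta$ unfolds to $\Gamma \mid \Delta \mid \Delta$ (the $\mathbf{f}$-component gets $\Gamma$, the two designated components get $\Delta$), and that asserting $\unaryOpPm p$ in the $\mathbf{f}$-component amounts to forcing $p$ to take value $\mathbf{b}$, which is captured by placing $p$ into the appropriate components. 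This is a routine but necessary semantic check; it is the step I expect to require the most care, since it is where the notational bookkeeping of the $\vdash_\logicP$ and $\dashv_\logicP$ translations must be made to line up with the hand-optimised rules.

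Having established that the reformulated $\mathsf{ME}_\logicP$ is provability-equivalent to the canonical instance $\mathsf{ME}^{\{\mathbf{b}\}}_\logicP$, the remaining steps are immediate: invoke Proposition~\ref{thm:compStandard} and its anti-sequent pendant (which are cited as holding for the general $\mathsf{S}_\mathcal{L}$ and $\mathsf{R}_\mathcal{L}$, and whose concrete instances $\mathsf{S}_\logicP$ and $\mathsf{R}_\logicP$ are stated to be sound and complete just before the theorem) so that the base-logic subderivations behave as required, then apply the general Soundness and Completeness Theorems. I would close by specialising the $\MEL$-sequent correspondence theorem to $\mathcal{L} = \logicP$, $\mathcal{I} = \{\mathbf{b}\}$, $\Sigma = \emptyset$, $\Theta = \mathit{Var}(\Gamma\cup\Delta)$, which gives exactly the statement of the theorem. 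In short, the proof is a concatenation of previously established equivalences, with the only real content being the verification that the practical rules $(m_1')$ and $(m_2')$ faithfully implement $(m_1)$ and $(m_2)$ for the specific operators $\unaryOpPm$ and $\unaryNotOpPm$ of $\logicP$.
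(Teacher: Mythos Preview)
Your proposal is correct and matches the paper's own treatment: the theorem is presented there as an immediate corollary of the general soundness and completeness results of Section~\ref{sec:genMethod}, specialised to $\mathcal{L}=\logicP$ and $\mathcal{I}=\{\mathbf{b}\}$. You even go slightly beyond the paper by explicitly flagging the need to check that the optimised rules $(m_1')$ and $(m_2')$ are provability-equivalent to the general $(m_1)$ and $(m_2)$, a point the paper leaves implicit.
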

Note that, if $\Gamma$ and $\Delta$ do not contain $\supset$, then the above result holds also for $\modelsLPm$.  
\begin{example}\label{ex:smp}
Recall the theory $\Gamma = \{ p, \neg (p \land \neg q)\}$ from Example~\ref{ex:modusPonens}. 
As $\Gamma \modelsPm q$ holds, the sequent $\emptyset ; p, \neg (p \land \neg q) \Rightarrow^{\{ \mathbf{b} \}}_{\logicP} q ; p,q$ is provable in $\mathsf{ME}_\logicP$. A proof of the sequent is as follows:

\begin{prooftree}
\AxiomC{$p, q \nmid q, p \nmid \emptyset$}
\RightLabel{$(\neg : \mathbf{t})^\dashv$}
\UnaryInfC{$p \nmid q, p \nmid \neg q$}
\RightLabel{$(\land : \mathbf{t}^2)^\dashv$}
\UnaryInfC{$p \nmid q, p \nmid p \land \neg q$}
\RightLabel{$(\neg : \mathbf{f})^\dashv$}
\UnaryInfC{$p, \neg (p \land \neg q) \nmid q, p \nmid \emptyset$}
\RightLabel{$(m_1')$}
\UnaryInfC{$p ; p, \neg (p \land \neg q) \Rightarrow^{\{ \mathbf{b} \}}_{\logicP} q ; q$}
\AxiomC{$p \mid q, p \mid q,p$}
\AxiomC{$p, q \mid q, p \mid q$}
\RightLabel{$(\neg : \mathbf{t})^\vdash$}
\UnaryInfC{$p \mid q, p \mid q, \neg q$}
\RightLabel{$(\land : \mathbf{t})^\vdash$}
\BinaryInfC{$p \mid q,p \mid q,p \mid q, p \land \neg q$}
\RightLabel{$(\neg : \mathbf{f})^\vdash$}
\UnaryInfC{$p, \neg (p \land \neg q) \mid q,p \mid q,p \mid q$}
\RightLabel{$(m_2')$}
\UnaryInfC{$\emptyset ; p, \neg (p \land \neg q), \unaryNotOpPm p \Rightarrow^{\{ \mathbf{b} \}}_{\logicP} q ; q $}
\RightLabel{$(m_3)$}
\BinaryInfC{$\emptyset ; p, \neg (p \land \neg q) \Rightarrow^{\{ \mathbf{b} \}}_{\logicP} q ; p,q$}
\end{prooftree}
Note that the top-most sequents are axioms in $\mathsf{S}_\logicP$ and $\mathsf{R}_\logicP$, respectively. 	
\end{example}

\begin{figure}[t!]
\hrule
\medskip
\begin{center}

\begin{minipage}[t]{0.32\textwidth}
\begin{prooftree}
\AxiomC{$\Gamma\mid\Delta\mid\Pi\mid\Omega,\varphi$}
\RightLabel{$(\neg : \mathbf{f})^\vdash$}
\UnaryInfC{$\Gamma,\neg \varphi \mid\Delta\mid\Pi\mid\Omega$}
\end{prooftree}
\end{minipage}
\begin{minipage}[t]{0.32\textwidth}
\begin{prooftree}
\AxiomC{$\Gamma\mid\Delta,\varphi\mid\Pi\mid\Omega$}
\RightLabel{$(\neg : \mathbf{n})^\vdash$}
\UnaryInfC{$\Gamma\mid\Delta,\neg \varphi\mid\Pi\mid\Omega$}
\end{prooftree}
\end{minipage}
\begin{minipage}[t]{0.32\textwidth}
\begin{prooftree}
\AxiomC{$\Gamma\mid\Delta\mid\Pi,\varphi\mid\Omega$}
\RightLabel{$(\neg : \mathbf{b})^\vdash$}
\UnaryInfC{$\Gamma\mid\Delta\mid\Pi,\neg \varphi\mid\Omega$}
\end{prooftree}
\end{minipage}

\begin{prooftree}
\AxiomC{$\Gamma,\varphi\mid\Delta\mid\Pi\mid\Omega$}
\RightLabel{$(\neg : \mathbf{t})^\vdash$}
\UnaryInfC{$\Gamma\mid\Delta\mid\Pi\mid\Omega,\neg \varphi$}
\end{prooftree}

\begin{prooftree}
\AxiomC{$\Gamma, \varphi, \psi \mid \Delta, \varphi, \psi \mid \Pi \mid \Omega $}
\AxiomC{$\Gamma, \varphi, \psi \mid \Delta \mid \Pi, \varphi, \psi \mid \Omega $}
\RightLabel{$(\land : \mathbf{f})^\vdash$}
\BinaryInfC{$\Gamma, \varphi \land \psi \mid \Delta \mid \Pi \mid \Omega $}
\end{prooftree}

\begin{prooftree}
\AxiomC{$\Gamma \mid \Delta, \varphi, \psi \mid \Pi \mid \Omega $}
\AxiomC{$\Gamma \mid \Delta, \varphi \mid \Pi \mid \Omega, \varphi $}
\AxiomC{$\Gamma \mid \Delta, \psi \mid \Pi \mid \Omega, \psi $}
\RightLabel{$(\land : \mathbf{n})^\vdash$}
\TrinaryInfC{$\Gamma \mid \Delta, \varphi \land \psi \mid \Pi \mid \Omega $}
\end{prooftree}

\begin{prooftree}
\AxiomC{$\Gamma \mid \Delta \mid \Pi, \varphi, \psi \mid \Omega $}
\AxiomC{$\Gamma \mid \Delta \mid \Pi, \varphi \mid \Omega, \varphi $}
\AxiomC{$\Gamma \mid \Delta \mid \Pi, \psi \mid \Omega, \psi $}
\RightLabel{$(\land : \mathbf{b})^\vdash$}
\TrinaryInfC{$\Gamma \mid \Delta \mid \Pi, \varphi \land \psi \mid \Omega $}
\end{prooftree}

\begin{prooftree}
\AxiomC{$\Gamma \mid \Delta \mid \Pi \mid \Omega, \varphi $}
\AxiomC{$\Gamma \mid \Delta \mid \Pi \mid \Omega, \psi $}
\RightLabel{$(\land : \mathbf{t})^\vdash$}
\BinaryInfC{$\Gamma \mid \Delta \mid \Pi \mid\Omega, \varphi \land \psi $}
\end{prooftree}

\begin{minipage}[t]{0.496\textwidth}
\begin{prooftree}
\AxiomC{$\Gamma \mid \Delta \mid \Pi, \varphi \mid \Omega, \varphi$}
\AxiomC{$\Gamma, \psi \mid \Delta \mid \Pi \mid\Omega$}
\RightLabel{$(\supset : \mathbf{f})^\vdash$}
\BinaryInfC{$\Gamma, \varphi \supset \psi \mid \Delta \mid \Pi \mid\Omega $}
\end{prooftree}
\end{minipage}
\begin{minipage}[t]{0.496\textwidth}
\begin{prooftree}
\AxiomC{$\Gamma \mid \Delta \mid \Pi, \varphi \mid \Omega, \varphi$}
\AxiomC{$\Gamma \mid \Delta, \psi \mid \Pi \mid\Omega$}
\RightLabel{$(\supset : \mathbf{n})^\vdash$}
\BinaryInfC{$\Gamma\mid \Delta, \varphi \supset \psi  \mid \Pi \mid\Omega $}
\end{prooftree}
\end{minipage}

\medskip

\begin{minipage}[t]{0.49\textwidth}
\begin{prooftree}
\AxiomC{$\Gamma \mid \Delta \mid \Pi, \varphi \mid \Omega, \varphi$}
\AxiomC{$\Gamma \mid \Delta \mid \Pi, \psi \mid\Omega$}
\RightLabel{$(\supset : \mathbf{b})^\vdash$}
\BinaryInfC{$\Gamma\mid \Delta \mid \Pi , \varphi \supset \psi \mid\Omega $}
\end{prooftree}
\end{minipage}
\begin{minipage}[t]{0.49\textwidth}
\begin{prooftree}
\AxiomC{$\Gamma, \varphi \mid \Delta, \varphi \mid \Pi\mid\Omega, \psi $}
\RightLabel{$(\supset : \mathbf{t})^\vdash$}
\UnaryInfC{$\Gamma\mid \Delta \mid \Pi \mid\Omega, \varphi \supset \psi  $}
\end{prooftree}
\end{minipage}

\medskip

\begin{minipage}[t]{0.245\textwidth}
\begin{prooftree}
\AxiomC{$\Gamma\mid \Delta \mid \Pi \mid\Omega$}
\RightLabel{$(w : \mathbf{f})^\vdash$}
\UnaryInfC{$\Gamma, \varphi \mid \Delta \mid \Pi \mid\Omega $}
\end{prooftree}
\end{minipage}
\begin{minipage}[t]{0.245\textwidth}
\begin{prooftree}
\AxiomC{$\Gamma\mid \Delta \mid \Pi \mid\Omega$}
\RightLabel{$(w : \mathbf{n})^\vdash$}
\UnaryInfC{$\Gamma\mid \Delta, \varphi  \mid \Pi \mid\Omega $}
\end{prooftree}
\end{minipage}
\begin{minipage}[t]{0.245\textwidth}
\begin{prooftree}
\AxiomC{$\Gamma\mid \Delta \mid \Pi \mid\Omega$}
\RightLabel{$(w : \mathbf{b})^\vdash$}
\UnaryInfC{$\Gamma\mid \Delta \mid \Pi, \varphi \mid\Omega $}
\end{prooftree}
\end{minipage}
\begin{minipage}[t]{0.245\textwidth}
\begin{prooftree}
\AxiomC{$\Gamma\mid \Delta \mid \Pi \mid\Omega$}
\RightLabel{$(w : \mathbf{t})^\vdash$}
\UnaryInfC{$\Gamma\mid \Delta \mid \Pi \mid\Omega, \varphi $}
\end{prooftree}
\end{minipage}

\end{center}

\vspace{0.8em}
\hrule
\caption{Rules of the sequent calculus $\mathsf{S}_\logicFour$.}\label{fig:fCalcPositive}
\end{figure}

Let us now consider the calculi for
$\modelsFourMa$ and $\modelsFourMb$.
For the inner logic $\logicFour$, we obtain the calculi $\mathsf{S}_\logicFour$ and $\mathsf{R}_\logicFour$ as follows:

\begin{enumerate}[(i)]
	\item the axioms of $\mathsf{S}_\logicFour$ are $\logicFour$-sequents of the form 
	$$\Gamma_1,\varphi \mid \Gamma_2,\varphi \mid \Gamma_3,\varphi \mid \Gamma_4,\varphi, \quad 
\Gamma_1 \mid \Gamma_2, \CNeither \mid \Gamma_3 \mid \Gamma_4, \quad \mbox{and} \quad
\Gamma_1 \mid \Gamma_2 \mid \Gamma_3, \CBoth \mid \Gamma_4,$$
and the inference rules of $\mathsf{S}_\logicFour$ are those given in Figure~\ref{fig:fCalcPositive}, and

	\item the axioms of $\mathsf{R}_\logicFour$ are $\logicFour$-anti-sequents of the form $\Gamma_1 \nmid \Gamma_2 \nmid \Gamma_3 \nmid \Gamma_4$, where $\Gamma_1$, $\Gamma_2$, $\Gamma_3$, and $\Gamma_4$ are sets of propositional and logical constants such that $\Gamma_1 \cap \Gamma_2 \cap \Gamma_3 \cap \Gamma_4 = \emptyset$, $\CNeither \not\in \Gamma_2$, and $\CBoth \not\in \Gamma_3$, and the inference rules of $\mathsf{R}_\logicFour$ are given in Figure~\ref{fig:fCalcNegative}.
\end{enumerate}

Again, these calculi are sound and complete and the intuition behind the axioms and rules is similar to that of the postulates of $\mathsf{S}_\logicP$ and $\mathsf{R}_\logicP$, respectively.
Also, for the calculi $\mathsf{ME}_{\logicFour}^1$ for $\modelsFourMa$ and 
$\mathsf{ME}_{\logicFour}^2$ for $\modelsFourMb$, which includes the calculi $\mathsf{S}_\logicFour$ and $\mathsf{R}_\logicFour$, we use the instance of rule $(m_3)$ for the logics at hand and variants of rules $(m_1)$ and $(m_2)$ which again directly encode the semantic properties of the operators $\unaryOp$ and $\unaryNotOp$ as follows: for  
$\mathsf{ME}_{\logicFour}^1$, we use the rules
\begin{center} 
\begin{minipage}[t]{0.40\textwidth}
\begin{prooftree}
\def\fCenter{\mbox{\ $\Rightarrow^{\{ \mathbf{b} \}}_{\logicFour}$\ }}
\AxiomC{$\Gamma \nmid \Gamma \nmid \Theta, \Pi, q \nmid \emptyset$}
\RightLabel{$(m_1^\dagger)$}
\UnaryInfC{$q,\Sigma ; \Gamma, \unaryNotOpFa \ \Pi \Rightarrow^{\{ \mathbf{b} \}}_{\logicFour} \Delta ; \Theta$}
\end{prooftree}
\end{minipage}
\begin{minipage}[t]{0.45\textwidth}
\begin{prooftree}
\def\fCenter{\mbox{\ $\Rightarrow^{\{ \mathbf{b} \}}_{\logicFour}$\ }}
\AxiomC{$\Sigma, \Gamma \mid \Sigma, \Gamma \mid \Pi, \Delta \mid \Sigma, \Delta$}
\RightLabel{$(m_2^\dagger)$}
\UnaryInfC{$\Sigma ; \Gamma, \unaryNotOpFa \ \Pi \Rightarrow^{\{ \mathbf{b} \}}_{\logicFour} \Delta ; \Theta$}
\end{prooftree}
\end{minipage}
\end{center}
\bigskip
and for $\mathsf{ME}_{\logicFour}^2$, we use the rules
\vspace{-1ex}
\begin{center} 
\begin{minipage}[t]{0.40\textwidth}
\begin{prooftree}
\def\fCenter{\mbox{\ $\Rightarrow^{\{ \mathbf{b}, \mathbf{n} \}}_{\logicFour}$\ }}
\AxiomC{$\Gamma \nmid \Gamma, \Theta, \Pi, q \nmid \Theta, \Pi, q \nmid \emptyset$}
\RightLabel{$(m_1^\ddagger)$}
\UnaryInfC{$q,\Sigma ; \Gamma, \unaryNotOpFb \ \Pi \Rightarrow^{\{ \mathbf{b}, \mathbf{n} \}}_{\logicFour} \Delta ; \Theta$}
\end{prooftree}
\end{minipage}
\begin{minipage}[t]{0.45\textwidth}
\begin{prooftree}
\def\fCenter{\mbox{\ $\Rightarrow^{\{ \mathbf{b}, \mathbf{n} \}}_{\logicFour}$\ }}
\AxiomC{$\Sigma, \Gamma \mid \Pi, \Gamma \mid \Pi, \Delta \mid \Sigma, \Delta$}
\RightLabel{$(m_2^\ddagger)$}
\UnaryInfC{$\Sigma ; \Gamma, \unaryNotOpFb \ \Pi \Rightarrow^{\{ \mathbf{b}, \mathbf{n} \}}_{\logicFour} \Delta ; \Theta$}
\end{prooftree}
\end{minipage}
\end{center}

\bigskip

\begin{figure}[t!]
\hrule
\medskip
\begin{center}

\begin{minipage}[t]{0.32\textwidth}
\begin{prooftree}
\AxiomC{$\Gamma\nmid\Delta\nmid\Pi\nmid\Omega,\varphi$}
\RightLabel{$(\neg : \mathbf{f})^\dashv$}
\UnaryInfC{$\Gamma,\neg \varphi \nmid\Delta\nmid\Pi\nmid\Omega$}
\end{prooftree}
\end{minipage}
\begin{minipage}[t]{0.32\textwidth}
\begin{prooftree}
\AxiomC{$\Gamma\nmid\Delta,\varphi\nmid\Pi\nmid\Omega$}
\RightLabel{$(\neg : \mathbf{n})^\dashv$}
\UnaryInfC{$\Gamma\nmid\Delta,\neg \varphi\nmid\Pi\nmid\Omega$}
\end{prooftree}
\end{minipage}
\begin{minipage}[t]{0.32\textwidth}
\begin{prooftree}
\AxiomC{$\Gamma\nmid\Delta\nmid\Pi,\varphi\nmid\Omega$}
\RightLabel{$(\neg : \mathbf{b})^\dashv$}
\UnaryInfC{$\Gamma\nmid\Delta\nmid\Pi,\neg \varphi\nmid\Omega$}
\end{prooftree}
\end{minipage}

\medskip

\begin{minipage}[t]{0.26\textwidth}
\begin{prooftree}
\AxiomC{$\Gamma,\varphi\nmid\Delta\nmid\Pi\nmid\Omega$}
\RightLabel{$(\neg : \mathbf{t})^\dashv$}
\UnaryInfC{$\Gamma\nmid\Delta\mid\Pi\nmid\Omega,\neg \varphi$}
\end{prooftree}
\end{minipage}
\begin{minipage}[t]{0.35\textwidth}
\begin{prooftree}
\AxiomC{$\Gamma, \varphi, \psi \nmid \Delta, \varphi, \psi \nmid \Pi \nmid \Omega $}
\RightLabel{$(\land : \mathbf{f}^1)^\dashv$}
\UnaryInfC{$\Gamma, \varphi \land \psi \nmid \Delta \nmid \Pi \nmid \Omega $}
\end{prooftree}
\end{minipage}
\begin{minipage}[t]{0.35\textwidth}
\begin{prooftree}
\AxiomC{$\Gamma, \varphi, \psi \nmid \Delta \nmid \Pi, \varphi, \psi \nmid \Omega $}
\RightLabel{$(\land : \mathbf{f}^2)^\dashv$}
\UnaryInfC{$\Gamma, \varphi \land \psi \nmid \Delta \nmid \Pi \nmid \Omega $}
\end{prooftree}
\end{minipage}

\medskip

\begin{minipage}[t]{0.32\textwidth}
\begin{prooftree}
\AxiomC{$\Gamma \nmid \Delta, \varphi, \psi \nmid \Pi \nmid \Omega $}
\RightLabel{$(\land : \mathbf{n}^1)^\dashv$}
\UnaryInfC{$\Gamma \nmid \Delta, \varphi \land \psi \nmid \Pi \nmid \Omega $}
\end{prooftree}
\end{minipage}
\begin{minipage}[t]{0.32\textwidth}
\begin{prooftree}
\AxiomC{$\Gamma \nmid \Delta, \varphi \nmid \Pi \nmid \Omega, \varphi $}
\RightLabel{$(\land : \mathbf{n}^2)^\dashv$}
\UnaryInfC{$\Gamma \nmid \Delta, \varphi \land \psi \nmid \Pi \nmid \Omega $}
\end{prooftree}
\end{minipage}
\begin{minipage}[t]{0.32\textwidth}
\begin{prooftree}
\AxiomC{$\Gamma \nmid \Delta, \psi \nmid \Pi \nmid \Omega, \psi $}
\RightLabel{$(\land : \mathbf{n}^3)^\dashv$}
\UnaryInfC{$\Gamma \nmid \Delta, \varphi \land \psi \nmid \Pi \nmid \Omega $}
\end{prooftree}
\end{minipage}

\medskip

\begin{minipage}[t]{0.32\textwidth}
\begin{prooftree}
\AxiomC{$\Gamma \nmid \Delta \nmid \Pi, \varphi, \psi \nmid \Omega $}
\RightLabel{$(\land : \mathbf{b}^1)^\vdash$}
\UnaryInfC{$\Gamma \nmid \Delta \nmid \Pi, \varphi \land \psi \nmid \Omega $}
\end{prooftree}
\end{minipage}
\begin{minipage}[t]{0.32\textwidth}
\begin{prooftree}
\AxiomC{$\Gamma \nmid \Delta \nmid \Pi, \varphi \nmid \Omega, \varphi $}
\RightLabel{$(\land : \mathbf{b}^2)^\dashv$}
\UnaryInfC{$\Gamma \nmid \Delta \nmid \Pi, \varphi \land \psi \nmid \Omega $}
\end{prooftree}
\end{minipage}
\begin{minipage}[t]{0.32\textwidth}
\begin{prooftree}
\AxiomC{$\Gamma \nmid \Delta \nmid \Pi, \psi \nmid \Omega, \psi $}
\RightLabel{$(\land : \mathbf{b}^3)^\dashv$}
\UnaryInfC{$\Gamma \nmid \Delta \nmid \Pi, \varphi \land \psi \nmid \Omega $}
\end{prooftree}
\end{minipage}

\medskip

\begin{minipage}[t]{0.32\textwidth}
\begin{prooftree}
\AxiomC{$\Gamma \nmid \Delta \nmid \Pi \nmid \Omega, \varphi $}
\RightLabel{$(\land : \mathbf{t}^1)^\dashv$}
\UnaryInfC{$\Gamma \nmid \Delta \nmid \Pi \nmid \Omega, \varphi \land \psi $}
\end{prooftree}
\end{minipage}
\begin{minipage}[t]{0.32\textwidth}
\begin{prooftree}
\AxiomC{$\Gamma \nmid \Delta \nmid \Pi \nmid \Omega, \psi $}
\RightLabel{$(\land : \mathbf{t}^2)^\dashv$}
\UnaryInfC{$\Gamma \nmid \Delta \nmid \Pi \nmid \Omega, \varphi \land \psi $}
\end{prooftree}
\end{minipage}
\begin{minipage}[t]{0.32\textwidth}
\begin{prooftree}
\AxiomC{$\Gamma, \varphi \nmid \Delta, \varphi \nmid \Pi\nmid\Omega, \psi $}
\RightLabel{$(\supset : \mathbf{t})^\dashv$}
\UnaryInfC{$\Gamma\nmid \Delta \nmid \Pi \nmid\Omega, \varphi \supset \psi  $}
\end{prooftree}
\end{minipage}

\medskip

\begin{minipage}[t]{0.32\textwidth}
\begin{prooftree}
\AxiomC{$\Gamma \nmid \Delta \nmid \Pi, \varphi \nmid \Omega, \varphi$}
\RightLabel{$(\supset : \mathbf{f}^1)^\dashv$}
\UnaryInfC{$\Gamma, \varphi \supset \psi \nmid \Delta \nmid \Pi \nmid\Omega $}
\end{prooftree}
\end{minipage}
\begin{minipage}[t]{0.32\textwidth}
\begin{prooftree}
\AxiomC{$\Gamma, \psi \nmid \Delta \nmid \Pi \nmid\Omega$}
\RightLabel{$(\supset : \mathbf{f}^2)^\dashv$}
\UnaryInfC{$\Gamma, \varphi \supset \psi \nmid \Delta \nmid \Pi \nmid\Omega $}
\end{prooftree}
\end{minipage}
\begin{minipage}[t]{0.32\textwidth}
\begin{prooftree}
\AxiomC{$\Gamma \nmid \Delta \nmid \Pi, \varphi \nmid \Omega, \varphi$}
\RightLabel{$(\supset : \mathbf{n}^1)^\dashv$}
\UnaryInfC{$\Gamma\nmid \Delta, \varphi \supset \psi  \nmid \Pi \nmid\Omega $}
\end{prooftree}
\end{minipage}

\medskip

\begin{minipage}[t]{0.32\textwidth}
\begin{prooftree}
\AxiomC{$\Gamma \nmid \Delta, \psi \nmid \Pi \nmid\Omega$}
\RightLabel{$(\supset : \mathbf{n}^2)^\dashv$}
\UnaryInfC{$\Gamma\nmid \Delta, \varphi \supset \psi  \nmid \Pi \nmid\Omega $}
\end{prooftree}
\end{minipage}
\begin{minipage}[t]{0.32\textwidth}
\begin{prooftree}
\AxiomC{$\Gamma \nmid \Delta \nmid \Pi, \varphi \nmid \Omega, \varphi$}
\RightLabel{$(\supset : \mathbf{b}^1)^\dashv$}
\UnaryInfC{$\Gamma\nmid \Delta \nmid \Pi , \varphi \supset \psi \nmid\Omega $}
\end{prooftree}
\end{minipage}
\begin{minipage}[t]{0.32\textwidth}
\begin{prooftree}
\AxiomC{$\Gamma \nmid \Delta \nmid \Pi, \psi \nmid \Omega$}
\RightLabel{$(\supset : \mathbf{b}^2)^\dashv$}
\UnaryInfC{$\Gamma\nmid \Delta \nmid \Pi , \varphi \supset \psi \nmid\Omega $}
\end{prooftree}
\end{minipage}

\end{center}

\hrule
\caption{Rules of the anti-sequent calculus $\mathsf{R}_\logicFour$.}\label{fig:fCalcNegative}
\end{figure}

In view of our general construction in Section~\ref{sec:genMethod}, we have the following result:

\begin{theorem}
	Let $\Gamma$ and $\Delta$ be $\mathcal{F}$-theories. Then,
	\begin{enumerate}[{\em (}i{\em )}]
	\item $\Gamma \modelsFourMa \Delta$ iff $\emptyset ; \Gamma \Rightarrow^{\{ \mathbf{b} \}}_{\logicFour} \Delta ; \mathit{Var}(\Gamma \cup \Delta)$ is provable in $\mathsf{ME}^1_\logicFour$, and
	
	\item $\Gamma \modelsFourMb \Delta$ iff $\emptyset ; \Gamma \Rightarrow^{\{ \mathbf{b}, \mathbf{n} \}}_{\logicFour} \Delta ; \mathit{Var}(\Gamma \cup \Delta)$) is provable in $\mathsf{ME}^2_\logicFour$,
	\end{enumerate}
where $\mathit{Var}(\Gamma \cup \Delta)$ is the set of propositional constants appearing in $\Gamma$ or $\Delta$.
\end{theorem}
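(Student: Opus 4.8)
The plan is to obtain the theorem as an instance of the general soundness and completeness results of Section~\ref{sec:genMethod}, specialised to the inner logic $\mathcal{L}=\logicFour$ with $\mathcal{I}=\{\mathbf{b}\}$ for part~(i) and $\mathcal{I}=\{\mathbf{b},\mathbf{n}\}$ for part~(ii). First I would recall the identifications $\modelsFourMa\,=\,\models^{\{\mathbf{b}\}}_\logicFour$ and $\modelsFourMb\,=\,\models^{\{\mathbf{b},\mathbf{n}\}}_\logicFour$ noted in Section~\ref{sec:genMethod}, together with the theorem relating $\minModels$ to $\MEL$-sequents, which already reduces $\Gamma\modelsFourMa\Delta$ (respectively $\Gamma\modelsFourMb\Delta$) to the truth of $\emptyset;\Gamma\minSequent\Delta;\mathit{Var}(\Gamma\cup\Delta)$ for the corresponding $\mathcal{I}$. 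The soundness and completeness theorems for $\mathsf{ME}^\mathcal{I}_\mathcal{L}$ then turn truth of this sequent into provability in $\mathsf{ME}^{\{\mathbf{b}\}}_\logicFour$ (respectively $\mathsf{ME}^{\{\mathbf{b},\mathbf{n}\}}_\logicFour$), where $\mathsf{S}_\logicFour$ and $\mathsf{R}_\logicFour$ are the calculi obtained from the constructions of Zach and of Bogojeski and Tompits, which are sound and complete for $\logicFour$ as noted above.

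The only real work is to bridge the gap between this generic calculus and the concrete systems $\mathsf{ME}^1_\logicFour$ and $\mathsf{ME}^2_\logicFour$ of the statement, which use the variant rules $(m_1^\dagger),(m_2^\dagger)$ (respectively $(m_1^\ddagger),(m_2^\ddagger)$) in place of the generic $(m_1),(m_2)$ and, accordingly, do not add rules for the connectives $\unaryOp$ and $\unaryNotOp$ to $\mathsf{S}_\logicFour$ and $\mathsf{R}_\logicFour$. I would show that the two systems prove exactly the same $\MEL$-sequents over the fragment not containing $\unaryOp$ or $\unaryNotOp$; since the end sequent $\emptyset;\Gamma\minSequent\Delta;\mathit{Var}(\Gamma\cup\Delta)$ lies in that fragment, this suffices. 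For the soundness direction I would argue exactly as in the proof of the soundness theorem, reading each premiss of a variant rule as an assertion about a many-sided (anti-)sequent: the component placements are chosen precisely so that refutability of $\Gamma\nmid\Gamma\nmid\Theta,\Pi,q\nmid\emptyset$ (the premiss of $(m_1^\dagger)$) is equivalent to the existence of an $\logicFour$-model $I$ of $\Gamma$ with $v^I_\logicFour(p)\notin\mathcal{I}$ for every $p\in\Theta\cup\Pi\cup\{q\}$ — the situation making the conclusion of $(m_1)$ vacuously true — and validity of $\Sigma,\Gamma\mid\Sigma,\Gamma\mid\Pi,\Delta\mid\Sigma,\Delta$ (the premiss of $(m_2^\dagger)$) is equivalent to the statement that every $\logicFour$-model $I$ of $\Gamma$ with $v^I_\logicFour(p)\in\mathcal{I}$ for $p\in\Sigma$ and $v^I_\logicFour(p)\notin\mathcal{I}$ for $p\in\Pi$ is a model of some $\varphi\in\Delta$. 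The same computation, now with $\mathcal{I}=\{\mathbf{b},\mathbf{n}\}$, validates $(m_1^\ddagger)$ and $(m_2^\ddagger)$.

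For the completeness direction I would re-run the induction of the completeness theorem on $|\Theta|$, stated now for sequents of the form $\Sigma;\Gamma,\unaryNotOp\,\Pi\minSequent\Delta;\Theta$ with $\Gamma$, $\Delta$, and $\Pi$ in the $\unaryOp/\unaryNotOp$-free language (so the end sequent is the case $\Pi=\emptyset$). The induction step is unchanged: an application of $(m_3)$ to such a sequent splits it into two sequents of the same shape, one with $p$ added to $\Sigma$ and one with $p$ added to $\Pi$, both with smaller $\Theta$, so the induction hypothesis applies. At the base $\Theta=\emptyset$ the case distinction is as before: if $\Gamma,\unaryNotOp\,\Pi\dashv_\logicFour\unaryOp\,q$ is refutable for some $q\in\Sigma$, then its translation $\Gamma\nmid\Gamma\nmid\Pi,q\nmid\emptyset$ is refutable too, hence provable in $\mathsf{R}_\logicFour$ by the completeness of $\mathsf{R}_\logicFour$, and one application of $(m_1^\dagger)$ (respectively $(m_1^\ddagger)$) proves the sequent; otherwise $\unaryOp\,\Sigma,\Gamma,\unaryNotOp\,\Pi\vdash_\logicFour\Delta$ is valid, its translation $\Sigma,\Gamma\mid\Sigma,\Gamma\mid\Pi,\Delta\mid\Sigma,\Delta$ is provable in $\mathsf{S}_\logicFour$ by its completeness, and one application of $(m_2^\dagger)$ (respectively $(m_2^\ddagger)$) proves the sequent. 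Chaining the three reductions yields both directions of (i) and (ii).

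The step I expect to be the main obstacle — though it is routine rather than conceptually hard — is the verification that the concrete component placements in $(m_1^\dagger)$, $(m_2^\dagger)$, $(m_1^\ddagger)$, and $(m_2^\ddagger)$ genuinely encode the conditions ``$v^I_\logicFour(p)\in\mathcal{I}$'' and ``$v^I_\logicFour(p)\notin\mathcal{I}$'' for both choices of $\mathcal{I}$. This is a finite case analysis over the four truth values of $\logicFour$ together with the semantics of many-sided (anti-)sequents — for instance, the observation that placing a constant simultaneously in the $\mathbf{f}$-, $\mathbf{n}$-, and $\mathbf{t}$-components of a sequent forces it to take the value $\mathbf{b}$, which is how $(m_2^\dagger)$ realises $\unaryOp\,\Sigma$ without an explicit rule for $\unaryOp$. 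Once this translation is pinned down, the rest is the bookkeeping sketched above on top of the already-established propositions and theorems.
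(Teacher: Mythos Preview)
Your proposal is correct and follows the same route the paper takes: the theorem is stated there simply ``in view of our general construction in Section~\ref{sec:genMethod}'', i.e., as the instance $\mathcal{L}=\logicFour$, $\mathcal{I}=\{\mathbf{b}\}$ (resp.\ $\{\mathbf{b},\mathbf{n}\}$) of the soundness/completeness theorems together with the reduction of $\minModels$ to truth of $\emptyset;\Gamma\minSequent\Delta;\mathit{Var}(\Gamma\cup\Delta)$. Your additional work---verifying that the variant rules $(m_1^\dagger),(m_2^\dagger),(m_1^\ddagger),(m_2^\ddagger)$ semantically match the generic $(m_1),(m_2)$ via the component-placement analysis, and re-running the completeness induction with the $\unaryNotOp\,\Pi$ bookkeeping---is exactly the detail the paper leaves implicit when it says these rules ``directly encode the semantics'' of $\unaryOp$ and $\unaryNotOp$, and your case analysis over the four truth values is the right way to pin this down.
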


To conclude our discussion, we give an example illustrating a proof 
in $\mathsf{ME}^1_\logicFour$.
\begin{example}
Consider again the theory $$\Gamma = \{ p, \neg (p \land \neg q) \}$$ from Example~\ref{ex:smp}. As $\Gamma \modelsFourMa q$ holds, the sequent 
$$\emptyset ; p, \neg (p \land \neg q) \Rightarrow^{\{ \mathbf{b} \}}_{\logicFour} q ; p,q$$
is provable in $\mathsf{ME}^1_\logicFour$. A proof, $\beta$, of this sequent is given below, using the subproof $\alpha$:

\begin{itemize} 
\item Proof $\alpha$:
\vspace{-1.5ex}
\begin{prooftree}
\def\defaultHypSeparation{\hskip 1mm}
\AxiomC{$p, q \mid p,p,q \mid p,q \mid q$}
\RightLabel{$(\neg : \mathbf{n})^\vdash$}
\UnaryInfC{$p, q \mid p,p, \neg q \mid p,q \mid q$}
\AxiomC{$p, q \mid p,p \mid p,q \mid q, p$}
\AxiomC{$p, q \mid p, p, q \mid p,q \mid q, \neg q $}
\RightLabel{$(\neg : \mathbf{n})^\vdash$}
\UnaryInfC{$p, q \mid p, p, \neg q \mid p,q \mid q, \neg q $}
\RightLabel{$(\land : \mathbf{n})^\vdash$}
\TrinaryInfC{$p, q \mid p, p \land \neg q \mid p,q \mid q $}	
 \RightLabel{ $(\neg : \mathbf{t})^\vdash$}
 \UnaryInfC{$p \mid p, p \land \neg q \mid p,q \mid q, \neg q$}
\end{prooftree}

\item Proof $\beta$:
\vspace{-1.5ex}
 \begin{prooftree}
 \def\defaultHypSeparation{\hskip 2mm}
 \def\labelSpacing{1pt}
 \AxiomC{$p,q \nmid p, q \nmid q, p \nmid \emptyset$}
 \RightLabel{ $(\neg : \mathbf{n})^\dashv$}
 \UnaryInfC{$p,q \nmid p, \neg q \nmid q, p \nmid \emptyset$}
 \RightLabel{ $(w : \mathbf{n})^\dashv$}
 \UnaryInfC{$p,q \nmid p, p, \neg q \nmid q, p \nmid \emptyset$}
 \RightLabel{$(\land : \mathbf{n})^\dashv$}
 \UnaryInfC{$p,q \nmid p, p \land \neg q \nmid q, p \nmid \emptyset$}
 \RightLabel{ $(\neg : \mathbf{t})^\dashv$}
 \UnaryInfC{$p \nmid p, p \land \neg q \nmid q, p \nmid \neg q$}
 \RightLabel{ $(\land : \mathbf{t})^\dashv$}
 \UnaryInfC{$p \nmid p, p \land \neg q \nmid q, p \nmid p \land \neg q$}
 \RightLabel{ $(\neg : \mathbf{n})^\dashv$}
 \UnaryInfC{$p \nmid p, \neg (p \land \neg q) \nmid q, p \nmid p \land \neg q$}
 \RightLabel{ $(\neg : \mathbf{f})^\dashv$}
 \UnaryInfC{$p, \neg (p \land \neg q) \nmid p, \neg (p \land \neg q) \nmid q, p \nmid \emptyset$}
 \RightLabel{ $(m_1)$}
 \UnaryInfC{$p ; p, \neg (p \land \neg q) \Rightarrow^{\{ \mathbf{b} \}}_{\logicFour} q ; q$}
 \AxiomC{$p \mid p, p \land \neg q \mid p,q \mid q, p$}
 \AxiomC{$\alpha$}
 \RightLabel{ $(\land : \mathbf{t})^\vdash$}
 \BinaryInfC{$p \mid p, p \land \neg q \mid p,q \mid q, p \land \neg q$}
 \RightLabel{ $(\neg : \mathbf{n})^\vdash$}
 \UnaryInfC{$p \mid p, \neg (p \land \neg q) \mid p,q \mid q , p \land \neg q$}
 \RightLabel{ $(\neg : \mathbf{t})^\vdash$}
 \UnaryInfC{$p, \neg (p \land \neg q) \mid p, \neg (p \land \neg q) \mid p,q \mid q$}
 \RightLabel{ $(m_2)$}
 \UnaryInfC{$\emptyset ; p, \neg (p \land \neg q), \unaryNotOpFa p \Rightarrow^{\{ \mathbf{b} \}}_{\logicFour} q ; q$}
 \RightLabel{ $(m_3)$}
 \BinaryInfC{$\emptyset ; p, \neg (p \land \neg q) \Rightarrow^{\{ \mathbf{b} \}}_{\logicFour} q ; p,q$}
 \end{prooftree}

\end{itemize}
  
\end{example}

\section{Conclusion}\label{sec:concl}

In this paper, we introduced a general method for obtaining sound and complete sequent-type calculi for a whole class of nonmonotonic minimal-entailment relations in the style of the proof-theoretical approach 
due to Bonatti and Olivetti~\cite{bonatti2002}.
We obtained particular calculi for well-known paraconsistent logics as special instances of our general method. 

Concerning future work, it would be interesting to develop a similar proof-theoretical approach for more general lattice-based entailment relations than the one studied here, like those discussed by Ginsberg~\cite{ginsberg1988}. Moreover, generalisations
to the predicate-logic case would be a worthwhile endeavour too.


\end{document}